\newcommand{\fastpar}{{Fast\-Racer}\xspace}
\newcommand{\Fastpar}{\fastpar}
\newcommand{\FastPar}{\fastpar}
\newcommand{\FASTPAR}{\fastpar}
\newcommand{\nft}{{Fast\-Racer}\xspace}
\newcommand{\spd}{{SPD3}\xspace}
\newcommand{\SPD}{\spd}
\newcommand{\ptracer}{{PTRacer}\xspace}
\newcommand{\PTRACER}{\ptracer}
\newcommand{\ft}{{Fast\-Track}\xspace}
\newcommand{\FT}{\ft}
\newcommand{\cracer}{{C-RACER}\xspace}
\newcommand{\CRACER}{\cracer}
\newcommand{\wsporder}{{WSP-Order}\xspace}
\newcommand{\WSPORDER}{\wsporder}
\newcommand{\thr}[1]{\textsf{\small #1}}
\newcommand{\task}[1]{\textsf{\small #1}}
\newcommand{\scalarclk}[1]{\textsf{#1}}
\newcommand{\vectorclk}[1]{\textbf{#1}}
\newcommand{\vecentry}[3]{\textsf{\small #1}\textsubscript{#2}(#3)}
\newcommand{\epoch}[2]{\textsf{\small #1@#2}}
\newcommand{\lca}[1]{\textsf{LCA}(#1)}
\newcommand{\ftsameepoch}{\textsc{Same Epoch}\xspace}
\newcommand{\ftexclusive}{\textsc{Exclusive}\xspace}
\newcommand{\ftshared}{\textsc{Shared}\xspace}
\newcommand{\ftshare}{\textsc{Share}\xspace}
\newcommand{\rdshared}{\textsc{Read Shared}\xspace}
\newcommand{\rdshare}{\textsc{Read Share}\xspace}
\newcommand{\blackscholes}{\bench{\small blackscholes}\xspace}
\newcommand{\bodytrack}{\bench{\small bodytrack}\xspace}
\newcommand{\fluidanimate}{\bench{\small fluidanimate}\xspace}
\newcommand{\fluidanimater}{\bench{\small fluidanimate-r}\xspace}
\newcommand{\streamcluster}{\bench{\small streamcluster}\xspace}
\newcommand{\streamclusterr}{\bench{\small streamcluster-r}\xspace}
\newcommand{\swaptions}{\bench{\small swaptions}\xspace}
\newcommand{\convexhull}{\bench{\small convexHull}\xspace}
\newcommand{\delrefine}{\bench{\small delRefine}\xspace}
\newcommand{\deltriang}{\bench{\small delTriang}\xspace}
\newcommand{\nn}{\bench{\bench{\small nearestNeigh}}\xspace}
\newcommand{\raycast}{\bench{\small rayCast}\xspace}
\newcommand{\kmeans}{\bench{\small kmeans}\xspace}
\newcommand{\karatsuba}{\bench{\small karatsuba}\xspace}
\newcommand{\sort}{\bench{\small sort}\xspace}
\newcommand{\ptracernotoolperfoverheadhimalaya}{9.44X\xspace}
\newcommand{\fastparnotoolperfoverheadhimalaya}{6.85X\xspace}
\newcommand{\cracernotoolperfoverheadhimalaya}{11.60X\xspace}
\newcommand{\fastparptracerspeeduphimalaya}{1.38X\xspace}
\newcommand{\fastparcracerspeeduphimalaya}{1.69X\xspace}
\newcommand{\ptracernotoolperfoverheadaravalli}{7.70X\xspace}
\newcommand{\ftnotoolperfoverheadaravalli}{50.26X\xspace}
\newcommand{\fastparnotoolperfoverheadaravalli}{6.41X\xspace}
\newcommand{\cracernotoolperfoverheadaravalli}{10.66X\xspace}
\newcommand{\fastparptracerspeeduparavalli}{1.20X\xspace}
\newcommand{\fastparcracerspeeduparavalli}{1.66X\xspace}
\newcommand{\ftfastparspeeduparavalli}{4.62X\xspace}
\renewcommand{\pcode}[1]{\textsf{\small #1}}
\newcommand{\asyncfinish}{async-finish\xspace}
\newcommand{\AsyncFinish}{Async-Finish\xspace}
\newcommand{\spawnsync}{spawn-sync\xspace}
\newcommand{\forkjoin}{fork-join\xspace}
\renewcommand{\grumbler}[2]{}
\renewcommand{\later}[1]{}
\renewcommand{\savespace}[1]{}
    \def\doi#1{\href{https://doi.org/\detokenize{#1}}{\url{https://doi.org/\detokenize{#1}}}}
\def\BibTeX{{\rm B\kern-.05em{\sc i\kern-.025em b}\kern-.08em
T\kern-.1667em\lower.7ex\hbox{E}\kern-.125emX}}
\newtheorem{lemma}{Lemma}
    \algnewcommand\And{\textbf{and}\xspace}
    \algnewcommand\Or{\textbf{or}\xspace}
\begin{document}

\iftoggle{ieeeFormat}{
    \title{
        Efficient Data Race Detection of Task-Based Programs Using Vector Clocks
        \thanks{\IEEEauthorrefmark{1}{Both authors contributed equally to this research.}}
        \thanks{This material is based upon work supported by IITK Initiation Grant and SERB SRG/2019/000384.}

    }}{}

\iftoggle{acmFormat}{
    \title{
        Efficient Data Race Detection of \AsyncFinish Programs Using Vector Clocks
    }}{}

\iftoggle{acmFormat}{
    \author{Shivam Kumar}
    \email{shivamkm07@gmail.com}
    \affiliation{%
        \institution{Indian Institute of Technology Kanpur}
        \country{India}
    }
    \author{Anupam Agrawal}
    \email{anupamag@cse.iitk.ac.in}
    \affiliation{%
        \institution{Indian Institute of Technology Kanpur}
        \country{India}
    }
    \author{Swarnendu Biswas}
    \email{swarnendu@cse.iitk.ac.in}
    \affiliation{%
        \institution{Indian Institute of Technology Kanpur}
        \country{India}
    }

    \renewcommand{\shortauthors}{Kumar \etal}

    \begin{abstract}

    Existing data race detectors for task-based programs incur significant run time and space overheads. The overheads arise because of frequent lookups in fine-grained tree data structures to check whether two accesses can happen in parallel.

    \savespace{Happens-before tracking using vector clocks is a popular technique for sound and precise dynamic data race detection of multithreaded programs. However, a \naive application of vector clocks for task-based programs is inefficient because of the overhead to track accesses from many concurrent tasks.}
    This work shows how to efficiently apply vector clocks for dynamic data race detection of \asyncfinish programs with locks. Our proposed technique, \emph{\fastpar}, builds on the 
    \ft algorithm with per-task and per-variable optimizations to reduce the size of vector clocks. 
    \FastPar exploits the structured parallelism of \asyncfinish programs to use a coarse-grained encoding of the dynamic task inheritance relations
    to limit the metadata
    in the presence of many concurrent readers.
    Our evaluation shows that \fastpar substantially improves time and space overheads over \ft, and is competitive with the state-of-the-art data race detectors for \asyncfinish programs with locks.

\end{abstract}


    \begin{CCSXML}
  <ccs2012>
  <concept>
  <concept_id>10011007.10011074.10011099.10011102.10011103</concept_id>
  <concept_desc>Software and its engineering~Software testing and debugging</concept_desc>
  <concept_significance>500</concept_significance>
  </concept>
  <concept>
  <concept_id>10011007.10010940.10010941.10010949.10010957.10010959</concept_id>
  <concept_desc>Software and its engineering~Multiprocessing / multiprogramming / multitasking</concept_desc>
  <concept_significance>500</concept_significance>
  </concept>
  <concept>
  <concept_id>10011007.10010940.10010941.10010949.10010957.10010963</concept_id>
  <concept_desc>Software and its engineering~Concurrency control</concept_desc>
  <concept_significance>500</concept_significance>
  </concept>
  <concept>
  <concept_id>10011007.10011006.10011041.10011048</concept_id>
  <concept_desc>Software and its engineering~Runtime environments</concept_desc>
  <concept_significance>500</concept_significance>
  </concept>
  <concept>
  <concept_id>10011007.10011074.10011099.10011102.10011103</concept_id>
  <concept_desc>Software and its engineering~Software testing and debugging</concept_desc>
  <concept_significance>500</concept_significance>
  </concept>
  </ccs2012>
\end{CCSXML}

\ccsdesc[500]{Software and its engineering~Software testing and debugging}
\ccsdesc[500]{Software and its engineering~Runtime environments}
\ccsdesc[500]{Software and its engineering~Multiprocessing / multiprogramming / multitasking}

    \keywords{Concurrency bugs, data races, happens-before, dynamic program analysis, task parallelism, \asyncfinish}
}{}

\iftoggle{ieeeFormat}{
    \author{\IEEEauthorblockN{Anupam Agrawal\IEEEauthorrefmark{1}, Shivam Kumar\IEEEauthorrefmark{1}, and Swarnendu Biswas}
        \IEEEauthorblockA{\textit{Department of Computer Science and Engineering} \\
            \textit{Indian Institute of Technology Kanpur, India}\\
            \{anupamag,shivamkm,swarnendu\}@cse.iitk.ac.in}}
}{}

\maketitle

\iftoggle{ieeeFormat}{

    \begin{IEEEkeywords}
        Concurrency bugs, data races, happens-before, dynamic program analysis, task parallelism, async-finish
    \end{IEEEkeywords}
}{}

\section{Introduction}
\label{sec:intro}


The task-based programming abstraction helps
write efficient and portable parallel code without having to think of low-level threads.
Tasks execute in parallel as hardware-agnostic logical units of work, and programmers only specify the dependencies among the tasks.
An accompanying runtime schedules tasks to threads and provides performance features like work-stealing to load-balance the execution.
Cilk~\cite{cilk,cilk-5}, X10~\cite{x10-oopsla-2005}, Habanero-Java~\cite{habanero-java},
\savespace{Intel Threading Building Blocks (TBB)~\cite{tbb-reinders},} and Java Fork-Join~\cite{java-fork-join} are popular task-based frameworks.

Task-based programs are susceptible to
concurrency errors such as atomicity violations~\cite{yoga-cgo-2016} and data races~\cite{spd3-pldi-2012,spbags,esp-bags-2010,cracer-spaa-2016,sp-order-spaa-2004}.
A \emph{data race} \savespace{in a task-based program} occurs when two accesses, with at least one write, from different tasks are incorrectly synchronized.
The presence of data races in shared-memory programs often indicates the presence of other concurrency errors~\cite{atomizer-2004},
and can affect an execution by crashing, hanging, or
corrupting data~\cite{portend-asplos12}. 
Data races are hard to detect and fix since they may occur nondeterministically
under specific thread interleavings, program inputs, and execution environments.
Data races have led to several real-world disasters~\cite{therac-25,blackout-2003,nasdaq-facebook};
such high-profile failures are a testament that
data races are present even in well-tested code.

\paragraph*{The problem}

There exists prior work to detect data races in task-based programs~\cite{ptracer-fse-2016,spd3-pldi-2012,cilk-spaa-98,spbags,esp-bags-2010,mellor-crumney-sc-1991,gtrace-europar-2018,sp-order-spaa-2004,cracer-spaa-2016}. Most
analysis utilize the series-parallel structure of execution of task-based programs to
check whether accesses can potentially execute in parallel (called \emph{series-parallel maintenance})~\cite{spbags,spd3-pldi-2012,ptracer-fse-2016,sp-order-spaa-2004,cracer-spaa-2016}.
\savespace{A good series-parallel (SP) maintenance data structure determines whether the race detection analysis runs in parallel.}
Prior techniques are either serial (\eg, \cite{spbags,esp-bags-2010,cilk-spaa-98}) or are difficult to parallelize (\eg, ~\cite{sp-order-spaa-2004}), detect races only in a given schedule (\eg, \cite{fasttrack}), continue to incur high run-time overheads (\eg, \cite{ptracer-fse-2016,spd3-pldi-2012,mellor-crumney-sc-1991}),
require tight coupling with the runtime scheduler for good performance (\eg, \cite{cracer-spaa-2016}), or do not support
lock-based synchronization (\eg, ~\cite{cracer-spaa-2016,spd3-pldi-2012,sp-order-spaa-2004}).

\paragraph*{Our approach}

In this work, we focus on efficient detection of \emph{per-input apparent data races in task-based programs with \asyncfinish semantics}.\footnote{Apparent data races occur because of the usage of parallel task constructs and ignore the per-schedule dynamic interleavings~\cite{race-conditions}. Feasible data races consider the nondeterministic timing variations during execution.}
For a given application and an input, per-input races include races observed in the current schedule as well as other schedules with possibly permuted memory operations, ignoring schedule-sensitive branches~\cite{ptracer-fse-2016}.
While prior work has \emph{ignored} how to optimize vector clocks for efficient race detection of task-based programs,
we \emph{argue} that analyses based on vector clocks are generic, inherently parallel, and have better data
locality than tree-based data structures for series-parallel (SP) maintenance.
\savespace{\FT~\cite{fasttrack,fasttrack2} is a
    popular data race detector for multithreaded programs that tracks the happens-before
    relation~\cite{happens-before} using vector clocks.}
However, a \naive application of vector-clock-based analysis to tasks results in
prohibitive memory and run-time overhead~\cite{ptracer-fse-2016,spd3-pldi-2012}.

This paper presents \emph{\fastpar}, a vector-clock-based data race detector for \asyncfinish programs. \Fastpar avoids the redundancy in \emph{per-task} vector clocks
by using auxiliary immutable data structures to maintain space- and time-efficient lossless vector clock representations correctly.
\fastpar exploits the structured parallelism in \asyncfinish programs
to optimize the space requirement of \emph{per-variable} metadata
in the presence of many concurrent readers.
Prior work has shown that a careful selection of only two ``concurrent read'' accesses
is sufficient for detecting read-write data races for \asyncfinish programs~\cite{spd3-pldi-2012,ptracer-fse-2016}.
\fastpar uses coarse-grained encoding of dynamic task inheritance relationships
to identify the two accesses (for both reads and writes) necessary for
race detection, and uses vector clocks to check whether two accesses can race.

We evaluate the performance and correctness
of \fastpar on 
\CPP applications that use Intel TBB~\cite{tbb-reinders} for task parallelism and compare with prior work~\cite{fasttrack,ptracer-fse-2016,cracer-spaa-2016}.
Our evaluation shows that the run time and memory overhead of \FastPar
is substantially lower compared to state-of-the-art data race detectors that target \asyncfinish programs.

\paragraph*{Contributions}


This paper makes the following contributions:

\begin{itemize}
    \item To the best of our knowledge, this work is the \emph{first} to show the viability of using vector clocks for efficient dynamic analysis of task-based programs;

    \item a race detector called \fastpar that detects per-input apparent races
          in \asyncfinish programs with locks,

    \item publicly available implementations of \fastpar and related techniques, and an evaluation that shows \fastpar significantly outperforms prior work.
\end{itemize}



\section{Background and Motivation\label{sec:background}}

This section briefly reviews data race detection of multithreaded programs using vector clocks. We also discuss closely related prior work on race detection of \asyncfinish programs.






\subsection{Race Detection with Vector Clocks\label{sec:vector-clocks}}

\savespace{Detecting data races by tracking the happens-before relation using vector clocks has several advantages. It is easy to encode the semantics of different programming models and synchronization mechanisms in the source language with vector clocks~\cite{barracuda-pldi-2017,scord-isca-2020,archer,google-tsan-v2}.}
Many race
detectors
for multithreaded programs use vector clocks to track happens-before (HB) relations
in an execution~\cite{multirace,pacer-pldi-2010,fasttrack,google-tsan-v2}.
Each thread maintains a scalar clock that is incremented at \emph{synchronization release} operations (\eg, lock release, monitor wait, thread fork and join, and volatile write). Each thread \thr{T} also maintains a vector clock \vectorclk{C} of size $n$, where there are $n$ threads in the application. The clock entry \vecentry{C}{\thr{T}}{\thr{U}} records the clock of thread \thr{U} when thread \thr{T} last synchronized with \thr{U}. A dynamic analysis updates per-variable vector clock metadata whenever a thread accesses a shared data or a lock variable.
Vector clock operations require \bigO{n} time and storage to monitor an
execution with $n$ threads.


\iftoggle{ieeeFormat}{
    \begin{small}
    \begin{algorithm}[ht]
        \caption{\small \FT analysis at synchronization operations}
        \label{alg:ft-sync-ops}
        \begin{algorithmic}[1]
            \Procedure{Spawn}{}
            \Comment{Thread \thr{T} spawns \thr{U}}
            \State \thr{U.vc} $\gets$ \thr{T.vc} $\cup$ \{\thr{T.epoch}\}

            \State \thr{T.epoch} $\gets$ \thr{T.epoch} + 1
            \Comment{Increment \thr{T}'s scalar clock}
            \State \thr{U.epoch} $\gets$ \thr{U.epoch} + 1
            \EndProcedure

            \Procedure{Join}{}
            \Comment{Thread \thr{T} joins with \thr{U}}
            \ForAll{<t,c> in \thr{U.vc}}
            \State \thr{T.vc[t]} $\gets$ \texttt{max}(\thr{U.vc[t]}, \thr{T.vc[t]})
            \EndFor
            \EndProcedure

            \Procedure{Acquire}{}
            \Comment{\thr{T} acquires lock \thr{L}}
            \ForAll{<t,c> in \pcode{L.vc}}
            \State \thr{T.vc[t]} $\gets$ \texttt{max}(\pcode{L.vc[t]}, \thr{T.vc[t]})
            \EndFor
            \EndProcedure

            \Procedure{Release}{}
            \Comment{\thr{T} releases lock \thr{L}}
            \State \pcode{L.vc} $\gets$ \thr{T.vc}
            \State \thr{T.epoch} $\gets$ \thr{T.epoch} + 1
            \EndProcedure

            \Function{CheckHB}{\epoch{c}{u},\thr{T}}
            \Comment{Check HB between epoch}
            \State \Return \epoch{c}{u} $\preceq$ \thr{T}.vc
            \EndFunction
        \end{algorithmic}
    \end{algorithm}
\end{small}

}{}

\iftoggle{acmFormat}{
    
}{}

\savespace{

\begin{table}
    \centering
    \footnotesize
    \begin{tabular}{llp{5.8cm}}
                                                                     & \textbf{State} & \textbf{Description}                                                               \\
        \toprule
        \multirow{5}{*}{\rotatebox[origin=c]{90}{\normalsize Read}}  & \ftsameepoch   & Thread reads a variable again before performing a synchronization operation        \\
                                                                     & \ftexclusive   & Current read happens after the previous read epoch                                 \\
                                                                     & \ftshare       & Current read is concurrent with previous read epoch                                \\
                                                                     & \ftshared      & Current read is concurrent with previous read vector                               \\
        \midrule
        \multirow{6}{*}{\rotatebox[origin=c]{90}{\normalsize Write}} & \ftsameepoch   & Thread writes a variable again before performing a synchronization operation       \\
                                                                     & \ftexclusive   & Read metadata is an epoch, current write happens after all previous accesses       \\
                                                                     & \ftshared      & Read metadata is a vector clock, current write happens after all previous accesses \\
    \end{tabular}
    \caption{Different analysis states in \ft~\cite{fasttrack}.}
    \label{tab:ft-states-new}
\end{table}

}

\savespace{
    \begin{figure}
        \centering
        \includegraphics[scale=0.45]{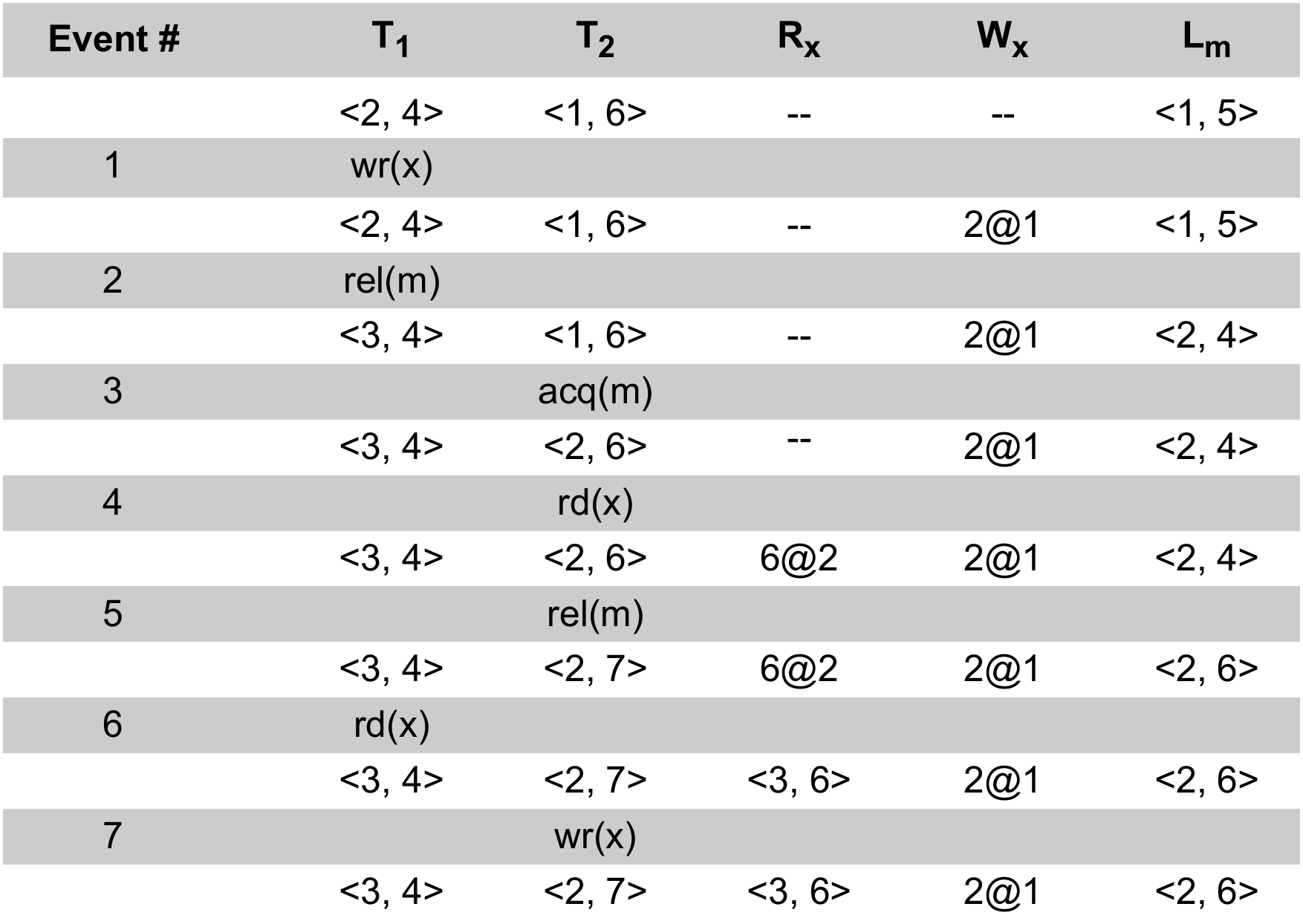}
        \caption{\ft algorithm for a program with threads \thr{T\textsubscript{1}} and \thr{T\textsubscript{2}}, lock variable \pcode{m}, and shared variable \pcode{x}.}
        \label{fig:ft-example}
    \end{figure}
}

\paragraph*{\FT}
\label{sec:ft}
The \emph{\FT} algorithm
tracks a single last writer and, in many cases, a single last reader~\cite{fasttrack}.
The total order on writes in a data-race-free program allows \FT to store only the \emph{last} write information. \FT stores the write metadata as an epoch \epoch{c}{T}, which is a tuple consisting of the writer thread identifier \thr{T} and the value of \thr{T}'s clock (say \scalarclk{c}) at the time of the write. The read metadata alternates between epoch and vector clock forms. An epoch representation suffices when there is a single reader or the current read happens after all previous reads. When there are concurrent readers, the read metadata is a vector clock (denoted by \pcode{vc}).
Algorithm~\ref{alg:ft-sync-ops} shows the pseudocode for the dynamic analysis performed by \ft at synchronization operations.
Before each access to a shared variable \pcode{x} by a thread \thr{T}, \FT checks whether the current access by \thr{T} happens after the previous write and all previous reads to \pcode{x} (\pcode{CheckHB},~\Cref{alg:ft-sync-ops}). A data race is reported if the current access by \thr{T} is concurrent with the last accesses. The shared data and lock variable metadata are updated upon a thread access.
\FT is popularly used as the basis for dynamically sound and precise\footnote{\emph{Sound} means no false negatives, and \emph{precise} means no false positives.} data race detection of multithreaded programs~\cite{barracuda-pldi-2017,pacer-pldi-2010,archer,google-tsan-v2,fib-oopsla-2017,slimfast}.



\savespace{
    Figure~\ref{fig:ft-example} shows an example interleaving and the corresponding metadata updates performed by \ft.
    Columns T\textsubscript{1} and T\textsubscript{2} show the operations performed by threads \thr{T\textsubscript{1}} and \thr{T\textsubscript{2}} and the thread vector clocks after each operation. Columns R\textsubscript{x} and W\textsubscript{x} represent the read and write metadata for the data variable \pcode{x}. L\textsubscript{m} shows the vector clock associated with lock variable \pcode{m}. The \ft algorithm detects a data race at the instruction number 8, since the vector clock of thread \thr{T\textsubscript{2}}
    does not happen after the last read  
    of variable \pcode{x} from \thr{T\textsubscript{1}}.
}

\savespace{
    However, \ft incurs substantial time ($\geq$5X) and space overheads~\cite{valor-oopsla-2015,fib-oopsla-2017,slimfast}. The high time overhead is due to the cost of \emph{synchronized} tracking of concurrent reads and for performing vector clock operations (can be \bigO{n}). The high space overhead is due to the need to maintain per-variable vector clocks for tracking concurrent readers and per-thread vector clocks to track inter-thread synchronization. Follow-up work optimizes the \ft analysis by removing redundant instrumentation~\cite{redcard-ecoop-2013}, reducing metadata redundancy~\cite{slimfast,slimstate}, and by limiting the analyses performed during execution~\cite{racemob,pacer-pldi-2010}.
}

\subsection{Race Detection for \AsyncFinish Programs}
\label{sec:async-finish}

\savespace{Structured task-parallel programming models enforce constraints over the unstructured model but still can express a wide range of parallel computations~\cite{spd3-pldi-2012}.
    Cilk~\cite{cilk,cilk-5}, X10~\cite{x10-oopsla-2005}, Habanero Java~\cite{habanero-java}, and Intel TBB~\cite{tbb-reinders} provide support for structured task parallelism.}

Frameworks like X10~\cite{x10-oopsla-2005} and Habanero Java~\cite{habanero-java} support  structured task parallelism with \asyncfinish semantics.
The statement ``\code{async \{\task{t}\}}'' creates a new child
task \task{t} that can run in series or in parallel with its parent task. The statement
``\code{finish \{\task{t}\}}'' causes the current task to wait for all the recursively-created tasks
within the block \task{t}. The \asyncfinish model is terminally-strict, which means each join edge
goes from the last instruction of a task to \emph{any} of its ancestors in the inheritance
tree~\cite{esp-bags-2010}.
\savespace{The \asyncfinish model is more general than the \spawnsync model,
    since every task in the \spawnsync model must join with its \emph{immediate} parent (\ie, fully-strict)~\cite{spd3-pldi-2012}.}
In the following, we discuss dynamic data race detection
techniques for 
\asyncfinish programs.

\begin{figure}[t]
    \centering
    \begin{subfigure}{\linewidth}
        \centering
        {\includegraphics[scale=0.29]{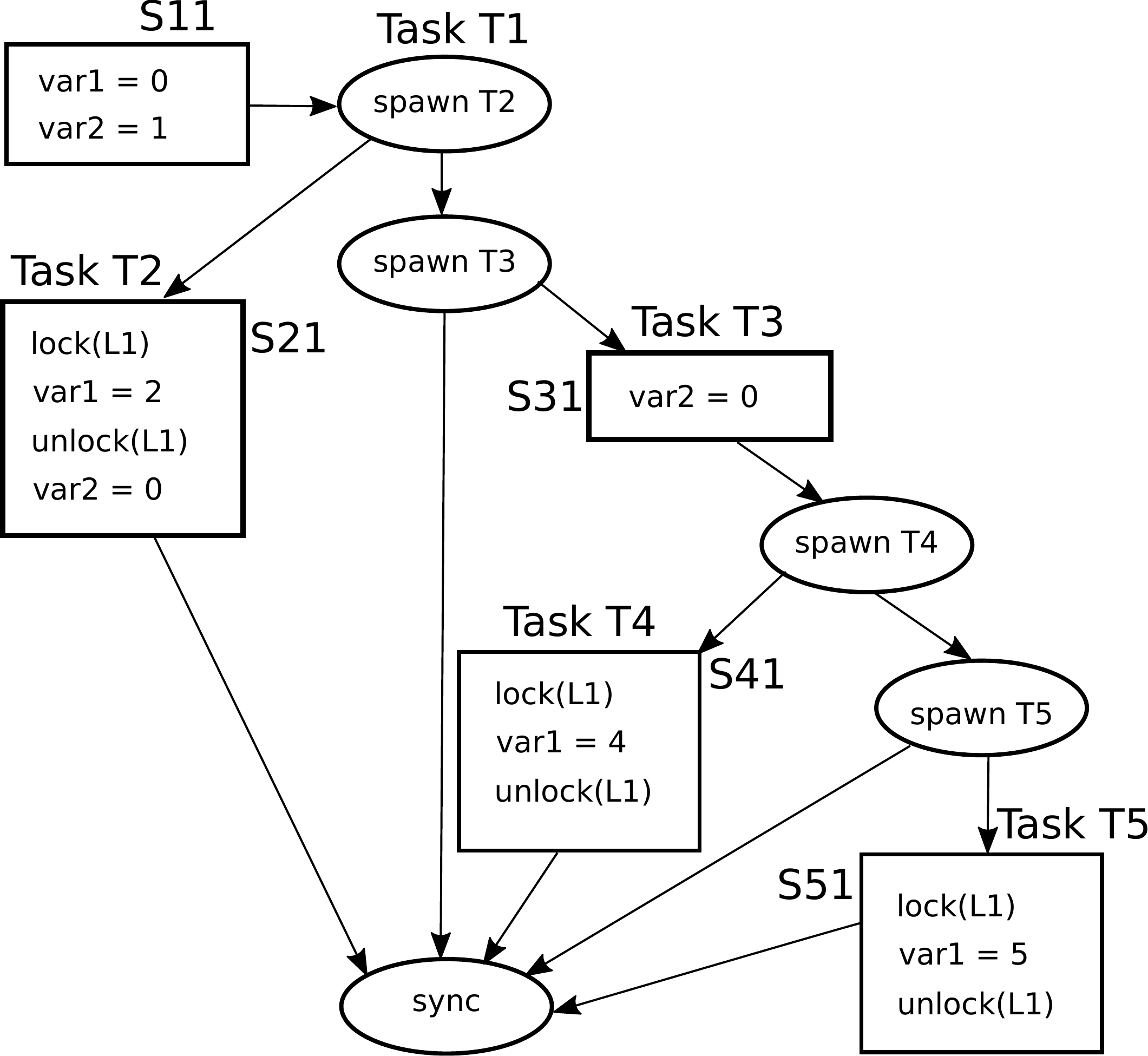}}
        \caption{An \asyncfinish program with tasks \task{T1}--\task{T5}, shared variables \code{var1} and \code{var2}, and a lock variable \code{L1}.}
        \label{fig:task-parallel-code}
    \end{subfigure}\\
    \begin{subfigure}{\linewidth}
        \centering
        {\includegraphics[scale=0.29]{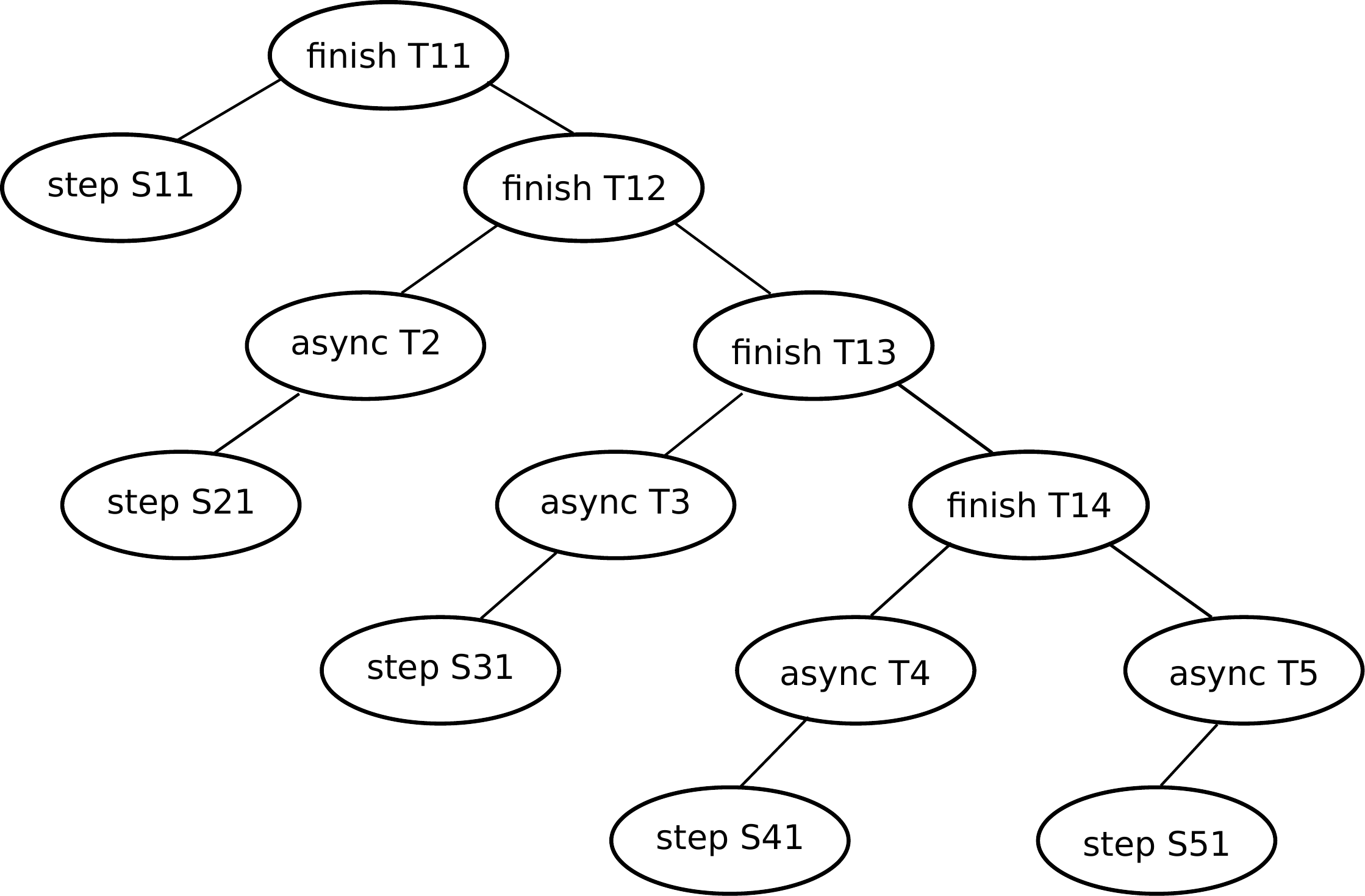}}
        \caption{The DPST for the program in Figure~\ref{fig:task-parallel-code}.
        }
        \label{fig:dpst}
    \end{subfigure}
    \caption{An \asyncfinish program and its DPST.}
    \label{fig:task-parallel-example}
\end{figure}

\paragraph*{\SPD}

\SPD~\cite{spd3-pldi-2012} uses a \emph{dynamic program structure tree} (DPST) to capture the
semantics of an \asyncfinish program.
A DPST consists of \pcode{step}, \pcode{finish}, and \pcode{async} nodes.
A \pcode{step} node represents the maximal sequence of instructions without any task management. An \pcode{async} node represents the spawning of a child task by a parent task. The descendants of an \pcode {async} node execute asynchronously with the remainder of the parent task. A \pcode{finish} node is created when a parent task spawns a child task and waits for the child, and its descendants, to complete. A \pcode{finish} node is thus the parent of all \pcode{async}, \pcode{finish}, and \pcode{step} nodes executed by its children or their descendants. Figure~\ref{fig:task-parallel-example} shows an \asyncfinish program and the corresponding DPST.
All executions of a data-race-free \asyncfinish program with the same input result in the same DPST~\cite{spd3-pldi-2012}.

The operational semantics of \asyncfinish programs imply a left-to-right computation order of sibling nodes belonging to a common parent task.
Thus, a DPST node's children are also ordered left-to-right to reflect the computation order
in their parent.
\savespace{\SPD exploits the structured parallelism 
    to optimize 
    SP maintenance.}
\savespace{\ie, check whether two memory accesses can happen in parallel.}
On a variable access, \spd searches for the \emph{lowest common ancestor} (LCA) of the current access 
(\ie, a \pcode{step} node)
and the last access stored in the variable's metadata. \SPD reports a race if the left child of
the LCA, which is an ancestor of the \pcode{step} node representing the last access, is an \pcode{async} node that indicates \emph{concurrent} execution of the last
access and the current task.
The DPST allows \spd to maintain one metadata
location for writes and two locations for reads in shadow memory.
\savespace{\spd \emph{ignores} analyzing locks in \asyncfinish programs.}


\savespace{\SPD is a sound and precise data race detector like \ft.}


\savespace{

\begin{figure*}
    \centering
    \includegraphics[width=\textwidth]{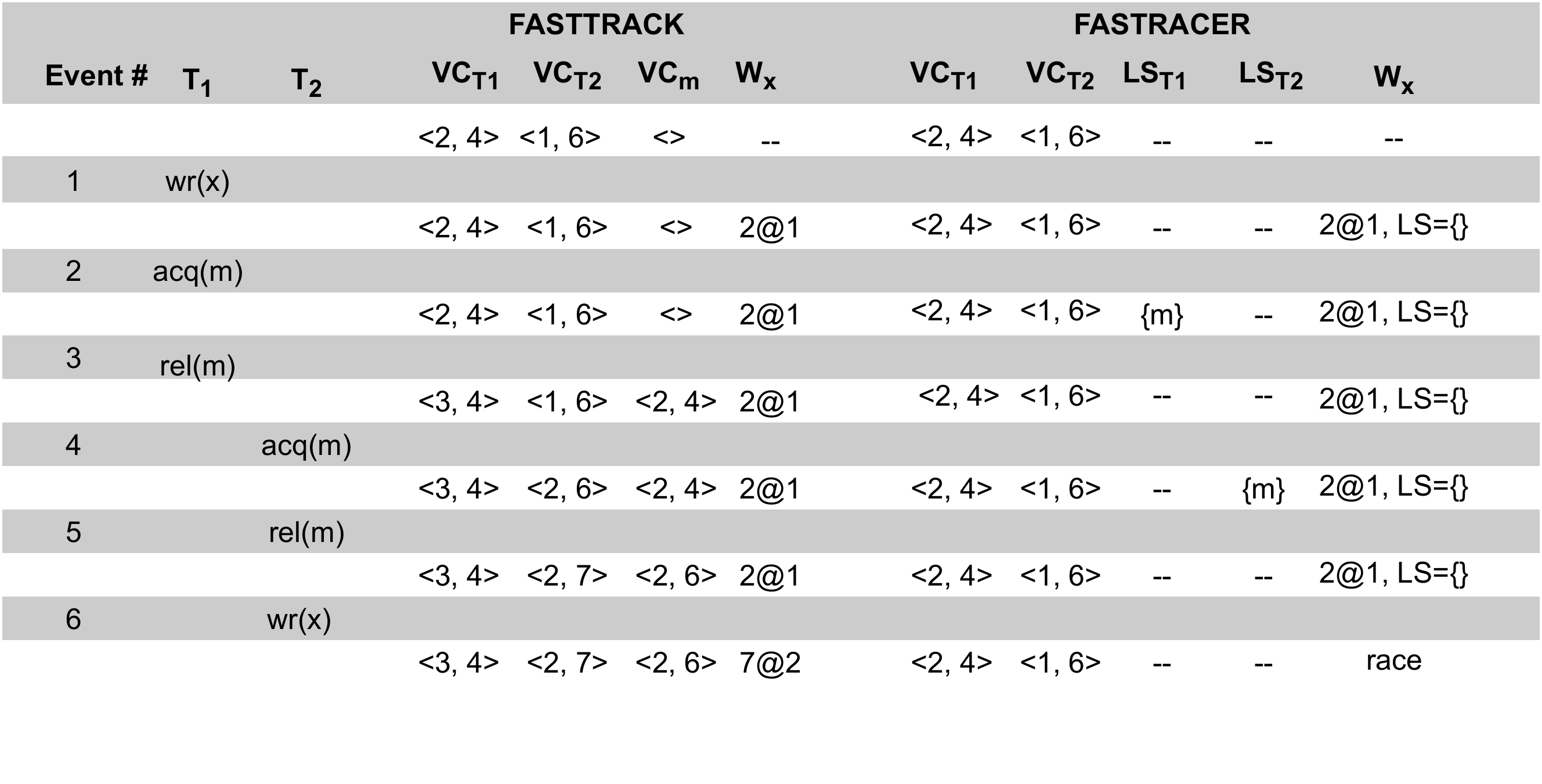}
    \caption{Comparison of \ft and \fastpar algorithm for a program with tasks \thr{T\textsubscript{1}} and \thr{T\textsubscript{2}}, lock variable \pcode{m} and shared variable \pcode{x}. \ft tracking happens-before relationships between lock acquire and release can miss races in some schedules, while \fastpar using LockSet(LS) technique to handle locks always detets races irrespective of schedule.}
    \label{fig:ft-fastpar-example}
\end{figure*}
}

\paragraph*{\ptracer}



\PTRACER extends \spd by detecting apparent races in \asyncfinish programs with locks~\cite{ptracer-fse-2016}.
\PTRACER 
maintains two metadata locations each for reads and writes to a shared variable.
The metadata per variable 
is proportional to the number of different locksets (\ie, set of locks held by the tasks at any time) with which the variable is accessed,
which is reasonable in practice because variables are usually accessed with similar locking
patterns.


\PTRACER selects two ``last read'' (``last write'') accesses from multiple parallel accesses with the same lockset to maintain constant metadata, such that any future write which can race with any one of the parallel reads (writes) will race with either one of the two chosen ``last readers'' (``last writers'').
\PTRACER makes these choices by selecting \pcode{step} nodes with the highest LCA among all parallel \pcode{step} nodes.
\PTRACER detects \emph{all} races for a given input even in the presence of lock-based synchronization.
Consider the shared variable \code{var1} which is updated by the parallel tasks \task{T2}, \task{T4}, and \task{T5} in Figure~\ref{fig:task-parallel-code}.
\savespace{The three parallel writes to \code{var1} do not race (protected by the lock \code{L1}).}
The step nodes corresponding to these accesses are \code{S21}, \code{S41}, and \code{S51}, respectively. Since, \code{S21} and \code{S51} have the highest LCA in the corresponding DPST,
\ptracer will store these two accesses and discard the access information for \code{S41}.

The race 
analysis in \ptracer is similar to \spd.
\PTRACER will report a race on \code{var2} for the example in Figure~\ref{fig:task-parallel-example}
because the left child of the LCA of step nodes \pcode{S21} and \pcode{S31} is an \pcode{async}
node.
\SPD will report false races on the variable \code{var1}.




\savespace{
    \subsection{Discussion}
    \label{sec:motivation}
}

\ptracer uses the DPST to maintain a constant amount of per-variable metadata independent of the
number of tasks executing the program. Furthermore, \ptracer performs frequent lookups in
the DPST to check whether two accesses can happen in parallel. However, the DPST can be deep for
programs with a recursive pattern of task creation and large because of many \pcode{step} nodes.
These lead to high run time and memory overheads.
\PTRACER uses an
array-based representation of the DPST and caches LCA lookups to improve the performance of LCA.
However, the DPST and the LCA computation continue to be a significant bottleneck for several
benchmarks, as we show in Section~\ref{sec:eval:perf}.
Thus, there is a need for \emph{more efficient} techniques to help detect data races in \asyncfinish programs.

\savespace{
    \medskip
    \noindent Table~\ref{tab:comparison} summarizes the 
    qualitative differences between \ft, 
    \wsporder, and \ptracer for data race detection of task-based programs. We discuss our proposed approaches \nft and \fastpar in the following sections.
}


\section{\fastpar: Efficient Dynamic Data Race Detection for \AsyncFinish Programs}
\label{sec:fastpar}

The thesis of this work is that vector clocks can provide better data locality for series-parallel (SP) maintenance in task-based programs compared to tree-based data structures used in prior work (\eg, \cite{cracer-spaa-2016,spd3-pldi-2012,ptracer-fse-2016}).
We present \emph{\fastpar}, a novel algorithm that integrates the benefits of vector-clock-based analysis \emph{and} exploits structured parallelism of \asyncfinish programs to limit the
amount of per-variable metadata.

Task-based programs create more, often orders of magnitude, parallel tasks than threads in multithreaded programs.
For example, the benchmark \nn
creates up to $\sim~3\times10^{6}$ tasks (Table~\ref{tab:run-time-stats}).
Race detectors like \ft use per-thread and per-variable vector clocks to capture the clock values of all the threads in the system.
While this representation works fine for multithreaded programs where the number of concurrent threads is comparatively small
(\(\sim\)\#cores), it is impractical for task-based programs and leads our
\ft implementation
to run out of memory on several benchmarks (Section~\ref{sec:eval:perf}).
Storing
only non-zero entries in a vector clock does not help since several concurrent tasks potentially access shared read-only variables.
Furthermore, maintaining vector clocks proportional to the number of threads can detect only feasible data races and misses races among concurrent tasks~\cite{romp-sc-2018}.
In the following, we discuss novel ideas
to solve these challenges.

\iftoggle{ieeeFormat}{
    \begin{small}
    \begin{algorithm}[htp]
        \caption{\small \fastpar analysis at synchronization operations}
        \label{alg:fastpar-sync-ops}
        \begin{algorithmic}[1]
            \Procedure{Spawn}{}
            \Comment{Task \task{T} spawns task \task{U}}

            \If{\texttt{size}(\task{T.rw\_vc}) $>$ \textsf{\smaller THRESHOLD}}
            \State \task{T.ro\_vc} $\gets$ REF(\{\task{T.ro\_vc} $\cup$ \task{T.rw\_vc}\})
            \State \task{T.rw\_vc} $\gets$ $\emptyset$
            \Else
            \State \task{U.ro\_vc} $\gets$ \task{T.ro\_vc}
            \State \task{U.rw\_vc} $\gets$ \task{T.rw\_vc}
            \EndIf

            \State \task{U.rw\_vc} $\gets$ \task{U.rw\_vc} $\cup$ \{\task{T.epoch}\}
            \State \task{U.joined} $\gets$ \task{T.joined}
            \State \task{U.lockset} $\gets$ \task{T.lockset}
            \State \task{U.IVC} $\gets$ \task{T.IVC} $\cup$ \{\texttt{getClock}(\task{U.epoch})\}
            \State \task{T.epoch} $\gets$ \task{T.epoch} + 1
            \State \task{U.epoch} $\gets$ \task{U.epoch} + 1

            \EndProcedure

            \Procedure{Join}{}
            \Comment{Task \task{T} joins with \task{U}}
            \State \task{T.joined} $\gets$ \task{T.joined} $\cup$ \{\texttt{getTaskId}(\task{U.epoch})\}
            \EndProcedure

            \Function{CheckHB}{\epoch{c}{u},\task{T}}
            \State
            \Comment{Check HB between epoch \epoch{c}{u} and
                \task{T}'s access}
            \State \Return \epoch{c}{u} $\preceq$ \task{T.rw\_vc} \textbf{or} c@u $\preceq$ \task{T.ro\_vc} \textbf{or} \task{u} $\in$ \task{T.joined}
            \EndFunction
        \end{algorithmic}
    \end{algorithm}
\end{small}

}{}

\iftoggle{acmFormat}{
    
}{}

\swarnendu{We should say somewhere that we are dealing with logical concurrency and so cannot deal with only those tasks mapped to threads.}

\subsection{Adapting Vector Clocks for Task Parallelism\label{sec:fastpar:nft-design}}

We analyzed the performance of \ft
and found that operations on task vector clocks incur high time and space overhead.
For example, the task join operation is a bottleneck because it requires
comparing and merging \emph{all} the clock values in the child and the parent tasks' vector clocks.
\Naively merging the vector clocks is not required since most vector clock entries remain unchanged during the lifetime of a task.

\paragraph*{Tracking read-only clock entries}

The first insight in \fastpar
is that \emph{most} vector clock entries for a task remain unchanged during the lifetime of the task, and the clock values continue to be the same as in the parent task. Thus, maintaining per-task copies of the clock entries is mostly redundant.

In \fastpar, a task vector clock is partitioned into a read-only (denoted by \pcode{ro\_vc}) and a read-write portion (denoted by \pcode{rw\_vc}). Child tasks in \fastpar maintain a reference to the
\pcode{ro\_vc} of their parents instead of maintaining redundant copies.
During a \pcode{spawn} operation (\savespace{\pcode{\smaller SPAWN}, }Algorithm~\ref{alg:fastpar-sync-ops}), \fastpar first checks if the size of \pcode{rw\_vc} is greater than a threshold. If yes, then
\fastpar merges \pcode{ro\_vc} and \pcode{rw\_vc} of the parent task into a new \pcode{ro\_vc} for the child task and \pcode{rw\_vc} of child task is kept empty.
Otherwise, the child \pcode{ro\_vc} points to the parent \pcode{ro\_vc} and the
parent's \pcode{rw\_vc} is copied to the child's \pcode{rw\_vc}.
Avoiding needless copies and redundant operations on the read-only portions of vector clocks helps
reduce space overheads \emph{and} improve performance. In the common case, most vector clock entries
of a task remain unchanged, \ie, the size of \pcode{rw\_vc} is small. Complete vector clock copies
happen only when the size of \pcode{rw\_vc} is greater than \pcode{\smaller THRESHOLD}.

\paragraph*{Optimizing vector clock join}

An access to a shared variable by a parent task after joining with a child task always happens after the child's accesses, since the accesses are synchronized by the \pcode{join} operation.
The second insight is that \fastpar
does not need to store the clock values of the child tasks \emph{after} the join operation. Instead, tracking the set of all child tasks that have joined with the parent task suffice.

Each task in \fastpar maintains the set of child tasks that have already joined with it in a \pcode{joined} data structure.
No vector clock join occurs when a task \task{T} joins with task \task{U},
instead, the child task is added to the \pcode{joined} of the parent task (\savespace{\pcode{\smaller JOIN}, }Algorithm~\ref{alg:fastpar-sync-ops}).
When a parent task spawns a new child task, the parent \code{joined} is copied to the child \code{joined}.
\savespace{Algorithm~\ref{alg:ft-sync-ops} shows that the complete vector clock is used for all the task operations in \ft, which is expensive and often unnecessary. Algorithm~\ref{alg:nft-sync-ops} shows the optimized spawn and join algorithms in \fastpar.}
The size of task vector clocks reduces significantly due to the \pcode{joined} optimization. Note that a trivial implementation of \pcode{joined} may not give performance benefits over \ft. We use efficient set implementations based on the Union-Find data structure, which reduces the overhead of set operations.

\paragraph*{Vector clock caching}

The vector clocks for a few tasks can be large even after the optimizations.
\FastPar uses explicit attributes to cache recently used vector clock values to avoid the cost of looking up the map data structure representing vector clocks. \FastPar indexes into the vector clock map when the thread id is not among the most recently used.

\subsection{Specializing for \AsyncFinish Programs}
\label{sec:fastpar:new-algo}


It can be expensive to 
maintain all concurrent readers 
of a shared variable in task-based programs.
\fastpar uses coarse-grained tracking of task inheritance relationships
to select relevant accesses from parallel readers/writers,\footnote{When tasks use locks,
    two writes can happen in parallel but do not constitute a race if they are protected by the same lock.}
which allows maintaining constant per-variable metadata per-lockset (i.e., the set of locks held by the task).


\begin{figure}
    \centering
    \includegraphics[scale=0.38]{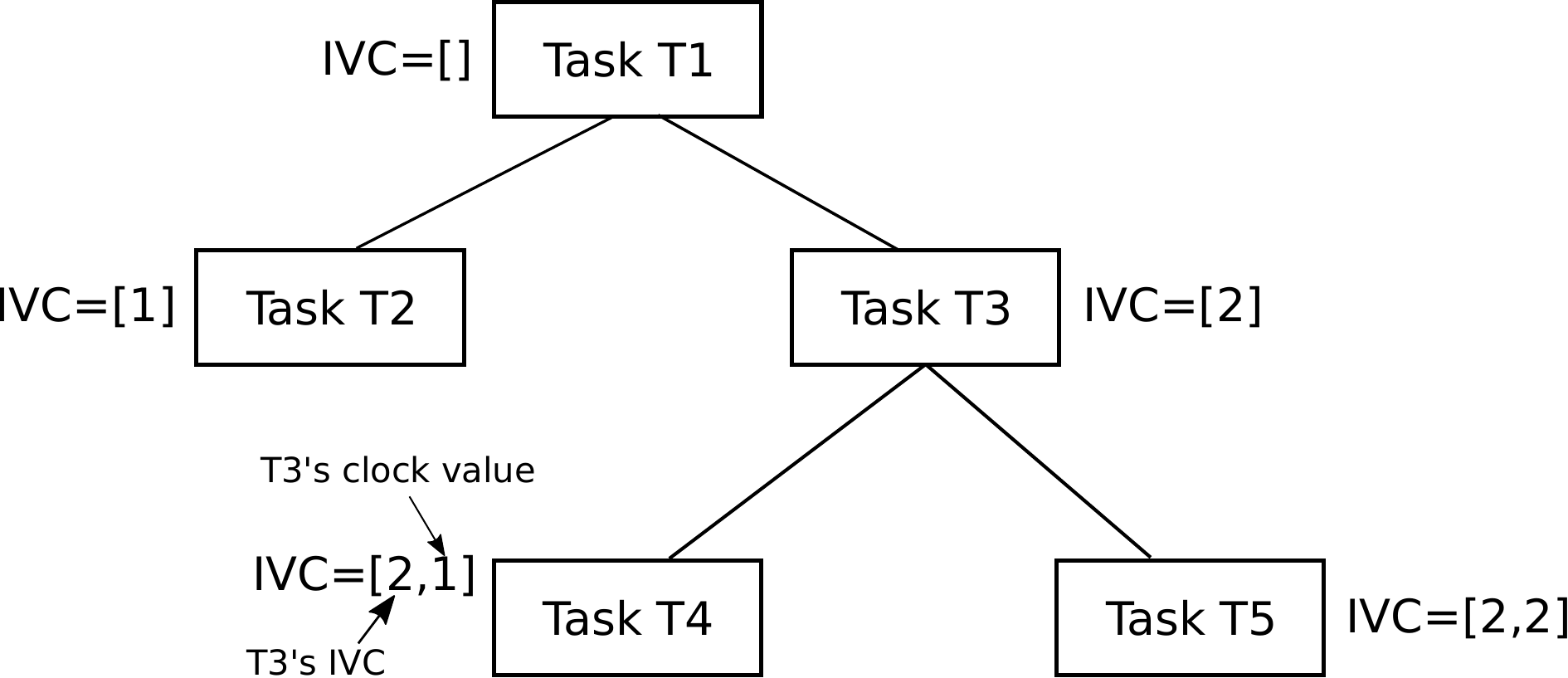}
    \caption{The inheritance tree 
        for program in Figure~\ref{fig:task-parallel-code}.}
    \label{fig:inheritance-graph}
\end{figure}

\paragraph*{Maintaining constant per-variable metadata}

\FastPar models the 
parent-child relationship among different tasks
with a \emph{task inheritance} tree. The nodes represent tasks, and edges represent the creation of child tasks by the parent.
Figure~\ref{fig:inheritance-graph} shows the inheritance tree for the 
program shown in Figure~\ref{fig:task-parallel-code} (ignore the IVC labels for now).
\FastPar uses the inheritance tree to \emph{efficiently} select two accesses out of multiple concurrent accesses with the same lockset, such that any racy future access that races with any one of the concurrent accesses must be racy with either one of the two chosen last accesses.
For parallel tasks accessing a variable with the same lockset, \fastpar stores the access
history of the two tasks with the highest LCA in the inheritance tree.
Consider the three accesses to \code{var1} from tasks \task{T2}, \task{T4}, and \task{T5} in \Cref{fig:task-parallel-code}.
In the inheritance tree shown in Figure~\ref{fig:inheritance-graph},
task nodes \task{T2} and \task{T4} (or \task{T5}) have the highest LCA, and so \fastpar stores the access
histories from \task{T2} and \task{T4} and discards \task{T5}. Any later access to \code{var1}, which is not parallel with both \task{T2} and \task{T4}, will not be parallel with \task{T5}.
So, discarding \task{T5}'s access information is correct.\footnote{In \pcode{\smaller async-finish} semantics, a task must join with its ancestor, either immediate or recursive. In case a parent task calls \pcode{\smaller join}, all children tasks within this join scope, either immediate or recursive, join with it.}
The metadata stored per shared variable in \fastpar is proportional to the number of the different locksets with which the variable has been accessed.

\savespace{Similar to \fastpar, \ptracer selects two accesses out of multiple concurrent accesses with the same lockset using a specific tree called DPST.}
The primary difference between a DPST 
and \fastpar's inheritance tree is in the \emph{granularity} of the nodes.
While a DPST decomposes a task into several unsynchronized regions represented by \pcode{step} nodes, an inheritance tree has just one node per task. The coarser modeling makes the inheritance tree much smaller and shallower than the DPST. Unlike a DPST, there is no left-to-right ordering in an inheritance tree.
While \ptracer uses DPST to check the concurrency between two accesses and select accesses with the highest LCA, \fastpar only does the latter with the inheritance tree. \fastpar uses vector clocks for data race detection to compensate for the coarser modeling and loss of ordering between the nodes (\Cref*{sec:fastpar:data-races}).

\paragraph*{Inheritance vector clock}



Instead of building an inheritance tree, \fastpar encodes the inheritance relations in a per-task array of clock values called \emph{Inheritance Vector Clock} (IVC)
for better performance.
An IVC is an immutable vector clock
that contains the clock values of all the 
reachable parents of a task \task{T} at the time of the creation of \task{T}. An IVC identifies the unique path in the inheritance tree from the root task to \task{T}, since a task spawn increments the scalar clock of the caller task. Whenever a parent task creates a child task, \fastpar copies the parent's IVC to the child and appends the parent's 
clock value at the end of child IVC. In Figure~\ref{fig:inheritance-graph}, the IVC of the parent task \task{T3} is copied to the child task \task{T4} and the current clock of \task{T3} (assumed to be one) is appended.

Both IVC labeling and the  Offset-Span (OS) labeling~\cite{mellor-crumney-sc-1991}
schemes compute a unique label from the label of the immediate predecessor, and guarantee that the length of a task's label will always be proportional to the
depth of the task in the inheritance tree.
However, there are two differences. 
First, the IVC of a task, once created, is immutable. Unlike OS labeling, any further task join operations do not modify the IVC.
Second, the Span part of OS labeling can only be assigned after the task graph is complete, \ie, OS labels cannot be computed while building the task graph.
Intuitively,
IVC labels are more similar to the Offset part,
with the constraint that they do not
get updated at \pcode{join} operations.
Whereas prior work uses OS labels for race detection~\cite{mellor-crumney-sc-1991,romp-sc-2018}, \fastpar uses IVC labels to identify two accesses out of multiple parallel accesses having the highest LCA in the inheritance tree.


\paragraph*{Computing tasks with the highest LCA}
\label{lca-computation}

To compute the highest LCA among three parallel tasks, \fastpar iterates over the IVCs of all the tasks simultaneously and stops at the first point of difference. After that, \fastpar chooses the task with a different clock value and any one of the other two. If all the three clock values are different, any two out of three tasks can be chosen. If \fastpar reaches the end of an IVC for any task \task{T}, it indicates that \task{T} must be the parent of the other two in the inheritance tree, and so, \fastpar chooses the parent task \task{T} and any one from the other two. The tasks so chosen will have the highest LCA in the inheritance tree (see \Cref*{lemma:two}).
The computation overhead depends on the sizes of the IVC, which is equal to the height of the inheritance tree. Since an inheritance tree is much shallower compared to a DPST, the computation is relatively efficient.


In 
Figure~\ref{fig:task-parallel-code}, the accesses to 
\code{var1} from three parallel tasks, \task{T2}, \task{T4}, and \task{T5}, are with the same lockset \code{\{L1\}}. The IVCs of the corresponding task nodes,
shown in Figure~\ref{fig:inheritance-graph}, are IVC$_{T2}$=[1], IVC$_{T4}$ = [2,1], and IVC$_{T5}$ = [2,2].
To select two out of the three accesses, \fastpar iterates over the three IVCs simultaneously. The first point of difference is at the first entry itself: IVC$_{T2}$[0]=1, IVC$_{T4}$[0]=IVC$_{T5}$[0]=2. So, \fastpar selects a task with a different clock value, \ie, \task{T2} and any one of other two, say \task{T4} (or \task{T5}). For the example in \Cref*{fig:inheritance-graph}, tasks \task{T2} and \task{T4} (or \task{T5}) indeed have the highest LCA in the inheritance tree.

\subsection{Detecting Data Races}
\label{sec:fastpar:data-races}

\FASTPAR extends the \ft algorithm (Section~\ref{sec:ft}) to detect races and update metadata.






\paragraph*{Metadata}

\begin{figure}
    \begin{subfigure}[t]{.48\textwidth}
        \caption{Metadata maintained in \ft.}
        \label{fig:ft-metadata}
        \centering
        \begin{lstlisting}[style=CPPStyle]
class PerVariableMetadata {
    epoch wr_md, rd_md; // Write and Read epoch
    map<taskid, clock> rd_vc; // Read vector clock
}
class PerLockMetadata {
    map<taskid, clock> vc;
}
class PerTaskMetadata {
    epoch epoch;
    map<taskid, clock> vc;
}\end{lstlisting}
    \end{subfigure}
    \hspace{5pt}
    \begin{subfigure}[t]{.48\textwidth}
        \caption{Metadata maintained in \fastpar.}
        \label{fig:fastpar-metadata}
        \centering
        \begin{lstlisting}[style=CPPStyle]
class PerVariableMetadata {
    PerLockMetadata lock[];
}
class PerLockMetadata {
    set<lockID> lockset;
    epoch rd1, rd2;    // Reader 1 and 2 metadata
    IVC* rd1_ivc, rd2_ivc; // IVC of Reader 1 and 2
    epoch wr1, wr2;    // Writer 1 and 2 metadata
    IVC* wr1_ivc,wr2_ivc; // IVC of Writer 1 and 2
}
class PerTaskMetadata {
    epoch epoch;
    map<taskid, clock> ro_vc; // Rd-only vector clock
    map<taskid, clock> rw_vc; // Rd-wr vector clock
    set<taskid> joined;  // All joined child tasks
    set<lockId> lockset; // Locks held by the task
    int IVC[];
}\end{lstlisting}
    \end{subfigure}
    \caption{Comparison of metadata state maintained in \ft and \fastpar.}
    \label{fig:metadata-states}
\end{figure}


Figure~\ref{fig:metadata-states} compares the metadata maintained in \ft and \fastpar. Given a task \task{T}, \ft uses a vector clock 
for storing the epoch values of \emph{all} other tasks (line 10, \Cref*{fig:ft-metadata}).
The attributes \pcode{ro\_vc}, \pcode{rw\_vc}, and \pcode{joined} in \code{PerTaskMetadata} in \Cref*{fig:fastpar-metadata} correspond to the optimizations introduced in
\Cref*{sec:fastpar:new-algo}.
While \ft uses vector clocks to track lock operations,
\fastpar uses the lockset mechanism~\cite{eraser}. Every task in \fastpar maintains
and updates \pcode{lockset} at lock acquire and release operations.

The per-variable state in \fastpar is an array of lock metadata. Each lock metadata stores
the access history of the variable with the distinct set of locks held before the
access. Every per-lock metadata contains four access history entries, two each for
reads and writes, which contain the epoch value and a reference to the IVC of the task at the time
of access.
Storing references to the IVCs suffice since they remain unchanged during the lifetime of a task.

\paragraph*{Race checks}

\savespace{

\begin{small}
  \begin{algorithm}[t]
    \caption{\hfill \fastpar: Task \task{T} accesses a shared variable \code{var}}
    \label{alg:fastpar-racecheck-algo}
    \begin{algorithmic}[1]
      \State \code{varMD} $\gets$ getMetadata(\code{var})
      \State \code{lockset} $\gets$ getLockSet(\code{varMD})
      \For{\code{lock} in \code{lockset}}
      \If{getLockSet(\code{lock}) $\cap$ getLockSet(\task{T}) == $\phi$}
      \If{isRead(\task{T}, \code{var}) \textbf{and} [\textbf{not} HB\_Check(\code{lock}.m\_wr1, \task{T}) \textbf{or} \textbf{not} HB\_Check(\code{lock}.m\_wr2, \task{T})]}
      \State Report Write--Read race 
      \EndIf
      \If{isWrite(\task{T}, \code{var}) \textbf{and} [\textbf{not} HB\_Check(\code{lock}.m\_rd1, \task{T}) \textbf{or} \textbf{not} HB\_Check(\code{lock}.m\_rd2, \task{T})]}
      \State Report Read--Write race 
      \EndIf
      \If{isWrite(\task{T}, \code{var}) \textbf{and} [\textbf{not} HB\_Check(\code{lock}.m\_wr1, \task{T}) \textbf{or} \textbf{not} HB\_Check(\code{lock}.m\_wr2, \task{T})]}
      \State Report Write--Write race 
      \EndIf
      \EndIf

      \If{getLockSet(\code{lock}) == getLockSet(\task{T}) }

      \If{isRead(\task{T}, \code{var})}
      \If{HB\_Check(\code{lock}.m\_rd1, \task{T}) or HB\_Check(\code{lock}.m\_rd2, \task{T})}
      \State Update corresponding reader history
      \EndIf
      \If{\task{T} in HighestLCA(\task{T}.IVC, \code{lock}.m\_rd1\_ivc, \code{lock}.m\_rd2\_ivc)}
      \State Replace any one reader with T
      \EndIf
      \EndIf

      \If{isWrite(\task{T}, \code{var})}
      \If{HB\_Check(\code{lock}.m\_wr1, \task{T}) or HB\_Check(\code{lock}.m\_wr2, \task{T})}
      \State Update corresponding writer history
      \EndIf
      \If{\task{T} in HighestLCA(\task{T}.IVC, \code{lock}.m\_wr1\_ivc, \code{lock}.m\_wr2\_ivc)}
      \State Replace any one writer with \task{T}
      \EndIf
      \EndIf
      \EndIf
      \EndFor

      \If{getLockSet(\task{T}) $\notin$ lock}
      \State \code{lock} $\gets$ \code{lock} $\cup$ createNewLockstate(\task{T})
      \EndIf
    \end{algorithmic}
  \end{algorithm}
\end{small}

}


When a shared variable is accessed, \fastpar iterates over all the lock metadata corresponding to distinct locksets with which the shared memory variable has been accessed. An empty intersection of the lockset of the current access and the lock metadata implies potentially parallel accesses. If the two locksets are disjoint, \FastPar checks if the epoch values stored in the access history happens before the current access using vector clocks (\pcode{\smaller CHECKHB}, \Cref*{alg:fastpar-sync-ops}).
If there is no such relationship, it implies that the prior access is concurrent with the current access. Finally, \fastpar reports a data race if one of the two accesses is a write.

Before accessing a shared variable \pcode{x}, \ft compares the current task's vector clock with the epoch(s) stored in \pcode{x}'s access history to determine if the current access 
to \pcode{x} happens \emph{after} the past accesses (\pcode{\smaller CHECKHB}, Algorithm~\ref{alg:ft-sync-ops}). \fastpar, apart from the vector clock entry check, also checks if the tasks present in \pcode{x}'s access history belong to \pcode{joined} of the current task. If all the tasks are present in \pcode{joined}, \fastpar infers that the current access 
happens \emph{after} prior accesses.
Since the vector clock is spread across 
\code{ro\_vc}, \code{rw\_vc}, and \code{joined}, 
\pcode{\smaller CHECKHB} (\Cref*{alg:fastpar-sync-ops}) checks the HB relation against all of them.

\paragraph*{Metadata updates}

\FASTPAR updates the read metadata corresponding to the current lockset if a read does not race with prior writes. \FastPar checks if any of the read epochs in the lock metadata corresponding to the current lockset happens before the current task's access.
If yes, then \FastPar updates that access entry with the current task's epoch and IVC. Otherwise, there are three parallel reads, and \fastpar needs to select two with the highest LCA. \FastPar iterates over the IVC of all three access entries and stops either at the first point of difference or if one of the IVCs end. \fastpar stores the access history of the task corresponding to the selected IVC and any one of the other two. Using any one of the other two works since, in both the cases, the two chosen tasks will have the highest LCA in the inheritance tree (Section~\ref{lca-computation}).
Figure~\ref{fig:tusker-example1} shows an example of how \fastpar updates the read metadata.
Assume tasks \task{T2}, \task{T3}, and \task{T4} all read a shared variable \code{x} and task
\task{T5} writes \code{x}. After the reads from \task{T2} and \task{T3}, the two readers stored for the variable \code{x} are \task{T2} and \task{T3}, because both these tasks can run in parallel.
Since task \task{T4} is spawned by \task{T1} after \task{T1} synchronizes with \task{T2}, so the
read by \task{T4} happens after the read of \task{T2}. The read metadata entry of \task{T2} is
replaced by \task{T4}. Next, when \task{T5} writes \code{x}, \fastpar checks the access with all
previous reads and reports a read-write race between the accesses from \task{T4} and \task{T5}.
The steps performed by \fastpar on a write access are similar.



\paragraph*{Synchronization operations}

During a \pcode{spawn} operation (Algorithm~\ref{alg:fastpar-sync-ops}), \fastpar checks if the size of \pcode{rw\_vc} is greater than a threshold. If yes, \fastpar merges \pcode{ro\_vc} and \pcode{rw\_vc} of the parent task into a new \pcode{ro\_vc} for the child task and \pcode{rw\_vc} of the child is kept empty. Otherwise, the child \pcode{ro\_vc} references the parent \pcode{ro\_vc}, and the child \pcode{rw\_vc} is copied from parent \pcode{rw\_vc}. Thereafter, \fastpar copies the parent's \pcode{joined} and \pcode{lockset}
to the child's \pcode{joined} and \pcode{lockset}, respectively.
In case of a \pcode{join} operation, the \pcode{joined} of the parent task is updated to contain the id of the child task.

A task's vector clock is not updated
in case of lock acquire and release operations. Instead, \fastpar uses \ptracer's mechanism to deal with lock operations. Each task maintains a \pcode{lockset}.
When a task \task{T} acquires a lock \pcode{L}, the \pcode{lockset} of \task{T} is updated to contain lock \pcode{L}. In case of a lock release, \pcode{L} is removed from the lockset.
When \task{T} accesses a shared variable \code{x}, the \pcode{lockset} is copied to the variable metadata. During a race check, \fastpar checks for the intersection of the locksets from the metadata history to infer a data race.
In practice, variables are accessed with the same set of locks, and hence maintaining access metadata for different sets of unique locks is reasonable. Furthermore, metadata update operations depend on the size of the IVC, but we find that the maximum depth of the inheritance tree is small ($\leq 30$ for our benchmarks).

\begin{figure}[t]
    \centering
    \begin{subfigure}{.25\linewidth}
        \centering
        \includegraphics[scale=0.58]{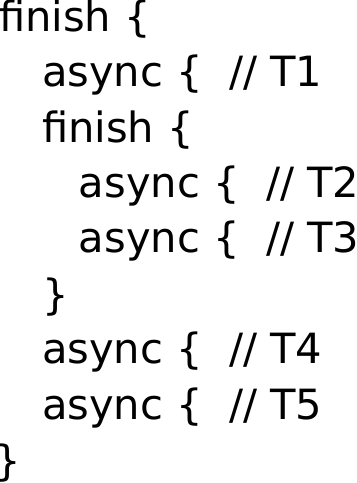}
        \label{fig:example1-code}
    \end{subfigure}%
    \hspace*{1em}
    \begin{subfigure}{.7\linewidth}
        \vspace*{-15pt}
        \centering
        \includegraphics[scale=0.60]{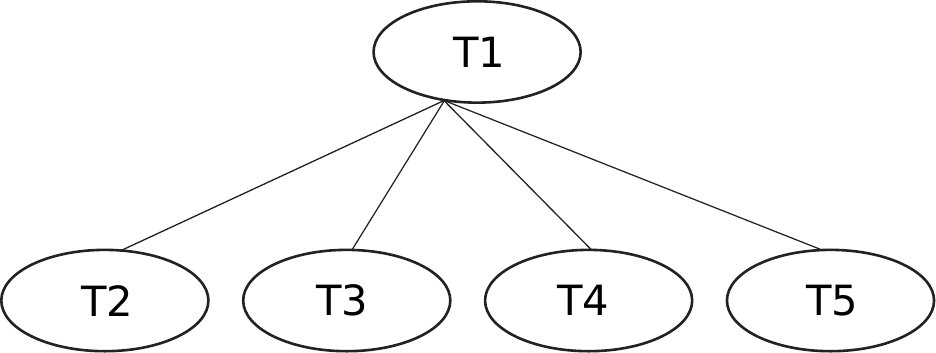}
        \label{fig:example1-ivc}
    \end{subfigure}
    \caption{An \asyncfinish program with five tasks \task{T1}--\task{T5} and the corresponding inheritance tree.}
    \label{fig:tusker-example1}
\end{figure}



\subsection{Characterizing \fastpar}
\label{sec:correctness}

\paragraph*{Race coverage.}

While \FT's race coverage is limited to the observed schedule, \fastpar can detect per-input apparent races like \ptracer. Although both \ft and \fastpar use vector clocks to track happens-before (HB) relations among accesses, \fastpar can detect races in \emph{other} schedules 
due to the following algorithmic differences with \ft.


\begin{itemize}
    \item \ft stores per-thread vector clocks. The races reported by \FT can vary across schedules since many tasks can map to a single thread, and the exact sequence of tasks mapped may vary across schedules. \fastpar stores vector clocks per task, and is not impacted by the mapping of tasks to threads.

    \item \ft tracks the HB relations to establish order among synchronization operations, which is sensitive to the order of lock 
          operations and varies across schedules. \FASTPAR uses the lockset technique to track synchronization operations and stores two reads and two writes per lockset in per-variable metadata. The metadata structure enables \fastpar to detect races irrespective of the order of 
          lock operations, so the number of races reported is the same across schedules.


\end{itemize}

\paragraph*{Correctness.}

The following two lemmas help prove the correctness of \fastpar.
We discuss the proofs in the Appendix.

\begin{lemma}
    \label{lemma:one}
    Storing access entries of only two tasks such that their LCA
    is the highest is sufficient. A future access to the variable, which is racy with any of the existing parallel accesses, must be racy with any one of the two access entries with the highest LCA.
\end{lemma}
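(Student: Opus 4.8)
The plan is to reduce the statement to a purely combinatorial fact about three accesses in \fastpar's inheritance tree and then discharge that fact by a short case analysis that couples the tree geometry with one observation about \emph{temporal} order. Throughout, the lockset drops out of the core argument: the two stored entries and any discarded entry all carry the same lockset, so ``the future access's lockset is disjoint from the discarded access's lockset'' holds exactly when it holds for the stored ones, and ``racy'' reduces to ``can run in parallel, with at least one a write''. So it suffices to argue about the parallelism relation.

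\textbf{Reduction via an invariant.} I would prove, as a loop invariant over the sequence of accesses to \code{var} with a fixed lockset, that the two stored entries $r_1,r_2$ (i) are mutually parallel and attain the highest $\mathsf{LCA}(\cdot,\cdot)$ in the inheritance tree among all pairs of mutually parallel accesses seen so far, and (ii) \emph{dominate} every discarded access $d$, meaning: every strictly later access $W$ that is racy with $d$ is also racy with $r_1$ or $r_2$. Each metadata update preserves this. If a stored $r_i$ is overwritten by the current access $e$ because \hb{r_i}{e}, then for any later $W$ racy with $r_i$, \hb{e}{W} would give \hb{r_i}{W} and \hb{W}{e} would contradict $W$ occurring after $e$, so $W$ stays racy with $e$ and (ii) survives. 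The only substantive update is when $e,r_1,r_2$ are pairwise parallel and \fastpar keeps the two with highest LCA; using the induction hypothesis and the elementary tree fact that the three pairwise LCAs of three nodes are never all distinct --- two of them equal $n:=\mathsf{LCA}(e,r_1,r_2)$ and the third is a descendant of $n$ --- the whole statement comes down to the three-task claim: if $e_1,e_2,e_3$ are pairwise parallel with the same lockset, \fastpar keeps the pair $\{e_i,e_j\}$ with $\mathsf{LCA}(e_i,e_j)=n$ and discards $e_k$, and $W$ is a strictly later access racy with $e_k$, then $W$ is racy with $e_i$ or $e_j$.

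\textbf{The three-task core.} I distinguish two geometries: (I) $e_1,e_2,e_3$ lie in three distinct child-subtrees $X_1,X_2,X_3$ of $n$, so all three pairwise LCAs equal $n$ and we may take $e_k=e_3$; (II) two of them, say $e_1,e_3$, share a child-subtree $X_1$ of $n$ while $e_2\in X_2$, so the highest LCA is attained by $\{e_1,e_2\}$ and $\{e_3,e_2\}$, and \fastpar keeps $e_2$ together with one of $\{e_1,e_3\}$. The lever is: if $W$ is not parallel to some $e_\ell$, then since $W$ and $e_\ell$ are happens-before comparable and $W$ cannot precede an access that has already occurred, \hb{e_\ell}{W}; structurally this means the branch of the inheritance tree carrying $e_\ell$ is joined back \emph{before} the branch carrying $W$ is even spawned. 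In case (II), if $W$ exits $X_1$ it inherits exactly $e_1$'s (resp.\ $e_3$'s) relationship to $n$ and is therefore parallel to the kept member of $\{e_1,e_3\}$; if $W$ stays inside $X_1$ it sits under the same child $X_1$ of $n$ as $e_1$, and $e_1\parallel e_2$ then yields $W\parallel e_2$. In case (I) the easy sub-cases are $W$ inside $X_1$ (then $e_1\parallel e_2$ gives $W\parallel e_2$) or inside $X_2$ (symmetric); the remaining sub-case is $\mathsf{LCA}(W,e_1)=\mathsf{LCA}(W,e_2)=n$, where I would assume $W\parallel e_1$ and $W\parallel e_2$ both fail, use the lever to locate $X_1$ and $X_2$ relative to $W$'s branch, and derive a contradiction with $e_1\parallel e_2$. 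Since one member of the racy pair is a write, ``parallel'' upgrades to ``racy'' on a stored entry, finishing the reduction; composing with \Cref{lemma:two} (the IVC walk returns a pair attaining the highest inheritance-tree LCA) then certifies that the pair \fastpar actually stores is a correct choice, and the write side is symmetric.

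\textbf{The main obstacle.} The delicate ingredient is exactly the ``lever'' and, more pointedly, reconciling it with the fact that \fastpar's inheritance tree is the DPST with the \pcode{step} \emph{and} \pcode{finish} structure contracted: under terminally-strict \asyncfinish semantics a task is joined by an arbitrary enclosing \pcode{finish} rather than only by its spawner, and the inheritance tree records spawns but not join points, so two children of $n$ can look identical in the tree while being joined at very different times. The case~(I) contradiction is where this bites: I need that two sibling branches of $n$ that are mutually parallel have ``compatible'' join structure, so that a third parallel sibling cannot outlive both of them. I would therefore isolate a self-contained lemma about \asyncfinish executions --- essentially the DPST concurrency criterion restricted to sibling subtrees, stated in terms of which sibling's enclosing \pcode{finish} closes first --- show that the vector clocks \fastpar maintains realize precisely the resulting $\parallel$ relation, and verify that the inheritance-tree $\mathsf{LCA}$ agrees with the owner task of the corresponding DPST node; with that lemma in hand both case analyses become routine, and establishing it cleanly is, I expect, the hardest part of the proof.
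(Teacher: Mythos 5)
Your reduction to the three-access core, the dominance invariant, and the easy sub-cases (your case (II), and case (I) with $W$ inside one of the three subtrees) are reasonable and broadly parallel the paper's framing, which likewise argues directly about three parallel same-lockset accesses $T1,T2,T3$ and one later access $T$. But the decisive step --- your case (I), where the three accesses sit in three distinct child subtrees of $n$ so that all pairwise LCAs tie and the kept pair is an arbitrary tie-break --- is exactly where you stop. You say you would isolate a self-contained lemma about the join structure of sibling subtrees, that FastRacer's vector clocks realize the resulting parallelism relation, and that establishing this cleanly is the hardest part; you never supply it. That lemma \emph{is} the content of the statement: as you yourself observe, the inheritance tree records spawns but not joins, so in the tie configuration nothing about tree geometry plus pairwise parallelism distinguishes the discarded sibling from the kept ones, and no contradiction with $e_1 \parallel e_2$ can be extracted from the tree alone. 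As written, the proposal is an outline with the crux deferred, which is a genuine gap.

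For comparison, the paper discharges this point by a different, operational argument built on FastRacer's \pcode{joined} sets rather than on a contracted-DPST concurrency criterion. If the later access by $T$ is non-racy with both $T1$ and $T2$ (whose locksets are disjoint from $T$'s), the only available orderings are ``$T_i$ accessed and then transitively spawned $T$'' or ``$T_i \in T.\pcode{joined}$''; since $T1 \parallel T2$, $T$ cannot descend from both, so both must be in $T.\pcode{joined}$, which forces some common ancestor $P'$ of $T1$ and $T2$ to have executed a join covering both and then (transitively) spawned $T$. Because $P'$ is an ancestor of the highest LCA of $T1,T2$, it is an ancestor of $T3$ as well, and the paper then concludes that $T3$ falls in that same join scope, hence $T3 \in T.\pcode{joined}$, contradicting $T$ being racy with $T3$. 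So the missing ingredient in your plan is precisely the ``the join that covers the two kept branches also covers the third'' fact, and it comes from the \asyncfinish join-scope property the paper invokes (a join at a task collects all children in its scope, immediate or recursive), not from DPST-style reasoning. Note also that this is a delicate point even in the paper: inferring ``$T3$ is under the join scope of $P'$'' from ancestry alone is exactly the step your auxiliary lemma would have to make rigorous, and it silently requires that the join in question cover earlier-spawned siblings as well (as with spawn-sync style joins); a finish scope excluding such a sibling is precisely the situation in which an arbitrary tie-break could discard the one entry that still races. Your proposal correctly locates the difficulty but does not resolve it, so it does not yet prove the lemma.
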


\begin{lemma}
    \label{lemma:two}
    The IVC 
    correctly computes the two tasks with the highest LCA among three parallel tasks.
\end{lemma}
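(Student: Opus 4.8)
The plan is to prove \Cref{lemma:two} in two stages: first pin down exactly what an IVC records about the inheritance tree, and then run a short case analysis that follows the three branches of the selection procedure described in \Cref{lca-computation}. The structural fact I would establish first is that the IVC of a task \task{T} is a faithful label of the unique root-to-\task{T} path in the inheritance tree: its length equals the depth of \task{T} in that tree, and for every $k$ its length-$k$ prefix equals the IVC of the depth-$k$ ancestor of \task{T}. This follows by induction on spawn events --- the root carries the empty IVC, and a child's IVC extends its parent's IVC by the parent's current scalar clock --- together with the fact that a task's scalar clock strictly increases at every spawn, so distinct children of a common parent receive distinct final IVC entries. Two consequences drop out: distinct tasks have distinct IVCs, and for any two tasks \task{A}, \task{B} the first index at which $\mathrm{IVC}(\task{A})$ and $\mathrm{IVC}(\task{B})$ disagree --- or the index at which the shorter of the two ends --- is exactly $\mathit{depth}(\mathrm{LCA}(\task{A},\task{B}))$; when both IVCs still have an entry at that index, the two distinct entries name the two distinct children of the LCA through which the paths to \task{A} and \task{B} descend.

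Next I would record the elementary tree fact about three nodes. Let \task{A}, \task{B}, \task{C} be the three tasks and let $L = \mathrm{LCA}(\task{A},\task{B},\task{C})$; below $L$ the three paths enter children of $L$, and since all three entering the \emph{same} child would contradict the definition of $L$, either all three enter distinct children of $L$ --- and then all three pairwise LCAs equal $L$ --- or exactly two of them, say \task{A} and \task{B}, enter a common child, in which case $\mathrm{LCA}(\task{A},\task{C}) = \mathrm{LCA}(\task{B},\task{C}) = L$ while $\mathrm{LCA}(\task{A},\task{B})$ is a strict descendant of $L$. In both scenarios the \emph{highest} (smallest-depth) pairwise LCA equals $L$ and is attained by at least two of the three pairs; the only pair that can fail to attain it is the one whose two tasks stay together past $L$.

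Then the case analysis matches the procedure. If the simultaneous scan of the three IVCs reaches a first point of difference at index $k$ before any IVC ends, then $k = \mathit{depth}(L)$ and the entries $\mathrm{IVC}(\task{A})[k], \mathrm{IVC}(\task{B})[k], \mathrm{IVC}(\task{C})[k]$ are not all equal; if they are all distinct, every pairwise LCA is $L$ and keeping any two tasks is correct, and if exactly two coincide, say those of \task{A} and \task{B}, then by the previous paragraph the highest LCA is realized by $(\task{A},\task{C})$ and $(\task{B},\task{C})$, so keeping the odd one out \task{C} together with one of \task{A}, \task{B} is correct. If instead one IVC ends first, say $\mathrm{IVC}(\task{C})$ at length $k$, with $\mathrm{IVC}(\task{A})$ and $\mathrm{IVC}(\task{B})$ agreeing with it on the first $k$ entries, then $\mathrm{IVC}(\task{C})$ is a prefix of both, so \task{C} is an ancestor of \task{A} and of \task{B}; hence $\mathrm{LCA}(\task{A},\task{C}) = \mathrm{LCA}(\task{B},\task{C}) = \task{C}$ at depth $k \le \mathit{depth}(\mathrm{LCA}(\task{A},\task{B}))$, and keeping \task{C} with one of the other two is again correct. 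Two IVCs cannot end simultaneously while agreeing, since that would force them to be the same task, contradicting distinctness. In every branch the pair the procedure selects realizes the highest LCA, which is the assertion of \Cref{lemma:two}; combined with \Cref{lemma:one} this shows it is sound to discard the third task.

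The hard part will not be the case analysis but making the structural fact airtight: proving that IVC prefixes coincide precisely with the IVCs of ancestors and that the index of first divergence is exactly the LCA depth, while keeping the depth/index off-by-one consistent (whether $L$'s own scalar clock appears at index $\mathit{depth}(L)$ or $\mathit{depth}(L)-1$ in its descendants' IVCs). The second delicate point is the prefix case: in an \asyncfinish program two concurrent tasks may be in an ancestor--descendant relationship, so the argument must explicitly handle an IVC that is a prefix of another and rule out two IVCs ending together; I expect this to be a few lines once the structural fact is in place. I would also note that concurrency of the three tasks is not actually needed for this computation --- it enters only through \Cref{lemma:one}.
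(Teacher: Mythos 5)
Your proposal is correct and takes essentially the same route as the paper's own proof: both rest on the fact that a task's IVC contains each ancestor's IVC as a prefix, so the first point of divergence (or the first IVC to end) identifies the relevant LCA, followed by a case split over whether all three entries differ, exactly two coincide, or one IVC is a prefix of the others. The differences are presentational only --- you prove the prefix/first-divergence fact explicitly by induction and organize the cases by the scan's branches, whereas the paper asserts the prefix property and cases on the configuration of pairwise LCAs --- so nothing further is needed.
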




\section{Implementation}
\label{sec:impl}


Our implementation extends the
\ptracer artifact.\footnote{\url{https://github.com/rutgers-apl/PTRacer}}
A static compiler pass using LLVM 3.7
instruments
load and store instructions in \CPP programs, and inserts function calls to execute the appropriate dynamic race detection analysis.
The 
implementation uses Intel Threading Building Blocks (TBB)~\cite{tbb-reinders} for
task parallelism.
The public implementation of \ptracer reports wrong race results for the benchmarks \bench{fluidanimate}, \bench{kmeans}, \bench{streamcluster}, and \bench{sort} (see Section~\ref{sec:eval:setup}).
We found
an implementation error was corrupting the DPST built by \ptracer, and there was a race
while updating a global array data structure used for LCA hashing.
After fixing these issues, the race reports were the same across all the tools.
Our modifications have minimal ($\leq$1\%) impact on the performance of \ptracer, and we use our fixed version for the evaluation.
Our prototype implementation of \fastpar extend the same static compiler pass to ensure all the prototypes do the same work. We have also reimplemented the \ft 
algorithm~\cite{fasttrack,fasttrack2}.

\paragraph*{Race detection for fork-join programs}



Utterback \etal propose a parallel and asymptotically optimal algorithm called \WSPORDER for race detection of \forkjoin programs~\cite{cracer-spaa-2016}.
The algorithm uses two order maintenance (OM)
data structures to maintain two total orders 
of all \emph{strands} in the computation. A strand is a sequence of instructions that contain no parallel primitives and executes sequentially.
A strand \pcode{x} logically precedes strand \pcode{y} if and only if \pcode{x} precedes \pcode{y} in both orderings. These orderings are sufficient to determine SP relationships.
The two OM data structures support constant-time operations like \pcode{insert} and \pcode{query}, and \emph{most} concurrent updates do not need synchronization.
However, large parts of the OM data structure are updated during \pcode{relabel} operations and hence require synchronization.
Since \pcode{relabel} operations are serialized,
the algorithm modifies a work-stealing task scheduler to prioritize the
operations. Furthermore, workers blocked on an \pcode{insert} or a \pcode{query} operation help with the \pcode{relabel} instead of being idle.
Note that the WSP-Order algorithm 
does not support lock-based synchronization and requires tight coupling with a work-stealing scheduler for good performance~\cite{cracer-spaa-2016}.





The public implementation of \wsporder is called \cracer.\footnote{\url{https://github.com/wustl-pctg/cracer}}
Since the original implementation uses the Batcher runtime~\cite{batcher-spaa-2014}, we reimplement \cracer with Intel TBB in LLVM
for a fair comparison.
\savespace{Our \cracer implementation instruments the same memory instructions as the other tools, and we try to be as close to the original implementation as possible.}
An important contribution in the original \cracer work is the parallelization of the \pcode{relabel} operations. We have not implemented task scheduler support for parallel \pcode{relabel} operations, \pcode{relabel}s in our implementation are serial.
The total time taken in the serial \pcode{relabel} operations is small in our experiments.
The run time of the benchmarks we report
for \cracer \emph{does not} account for the time taken for the \pcode{relabel} operations, which is a lower bound (\Cref*{sec:eval:perf}). Note that our reimplementation can also differ in performance from the original implementation because of differences in the schedulers.

\smallskip
\noindent We reuse eighty unit test cases 
from the 
\ptracer artifact and designed new test cases to check the correctness of
our \ft, \fastpar, and \cracer implementations. The unit tests include both racy and non-racy
programs, \emph{with and without} locks.
All the prototypes pass the unit tests. Our implementations are publicly available.\footnote{\url{https://github.com/prospar/fastracer-pmam-2022}}

\section{Evaluation\label{sec:eval}}

This section compares \fastpar with the closest prior work, \ft~\cite{fasttrack}, \ptracer~\cite{ptracer-fse-2016}, and \cracer~\cite{cracer-spaa-2016}.


\subsection{Experimental Setup}
\label{sec:eval:setup}

\paragraph*{Benchmarks}

We reuse twelve TBB-based applications used by \ptracer for our evaluation. These include four
applications, \blackscholes, \fluidanimate, \streamcluster, and \swaptions, from the PARSEC benchmark suite~\cite{parsec-pact-2008},
five geometry and graphics applications, \convexhull, \delrefine, \deltriang, \nn, and \raycast, from the PBBS benchmark suite~\cite{pbbs-benchmarks}, and three applications, \karatsuba, \kmeans, and \sort, from the Structured Parallel Programming book~\cite{spp-book}.
We left out the PARSEC application, \bodytrack, because of a compilation error,
and ignore the \CRACER benchmarks because they use Cilk-5~\cite{cracer-spaa-2016}.

The benchmarks follow \pcode{spawn-sync} semantics where a child task joins with its immediate parent (\pcode{async-finish} semantics are more general) and do not use locks, so we were able to run \cracer successfully for all the benchmarks.





\paragraph*{Evaluation platform}







The experiments execute on an Intel Xeon Gold 5218 system with one 16-core processor with hyperthreading disabled, 128~GB DDR4 primary memory, running Ubuntu Linux 20.04.3 LTS with kernel version 5.11.0.

\begin{figure*}
    \centering
    \includegraphics[width=\textwidth]{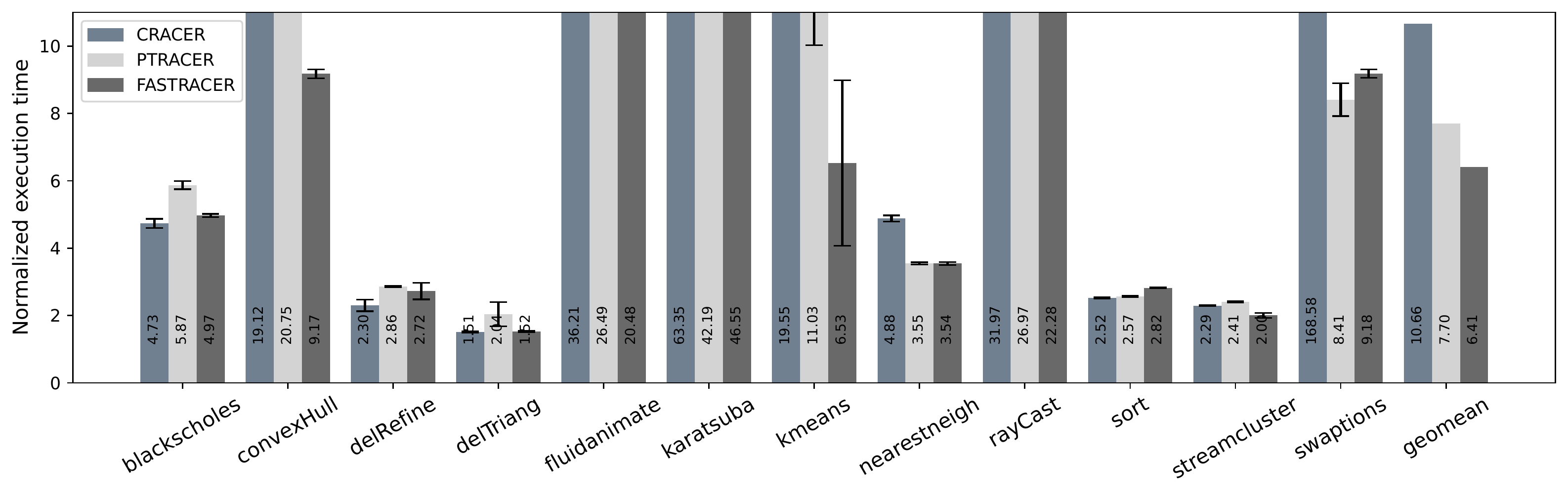}
    \caption{Performance comparison of the different techniques normalized to the unmodified execution time of the benchmarks.}
    \label{fig:aravalli-norm-bench-time}
\end{figure*}

\subsection{Performance Results\label{sec:eval:perf}}

Figure~\ref{fig:aravalli-norm-bench-time} reports the performance of \cracer, \ptracer, and \fastpar for all the benchmarks (arranged alphabetically). Every bar averages ten trials and is normalized to the baseline, which runs the unmodified benchmarks without instrumentation. Smaller bars mean better run times.
By default, the TBB scheduler creates $c$ threads and multiplexes the tasks to the $c$ threads, where $c$ is the number of cores in the system.
The results in Figure~\ref{fig:aravalli-norm-bench-time} are with 16 threads.

Our reimplementation of \cracer incurs an overhead of \cracernotoolperfoverheadaravalli over the unmodified applications.
The average performance slowdown incurred by \ptracer is \ptracernotoolperfoverheadaravalli, while it is 
\fastparnotoolperfoverheadaravalli for \fastpar.
\FastPar shows substantial improvement for multiple benchmarks like \convexhull, \deltriang, \fluidanimate, \kmeans, and \raycast.
The better performance of \fastpar is due to improved locality from vector clocks compared to cache-unfriendly tree traversals in \cracer and \ptracer.
On average, \fastpar outperforms both \cracer and \ptracer by \fastparcracerspeeduparavalli and \fastparptracerspeeduparavalli, respectively.



We cannot directly compare \cracer results to the original work~\cite{cracer-spaa-2016} because of differences in the implementation, benchmarks, and evaluation infrastructure. However, we reemphasize that our \cracer results do not include the cost of \pcode{relabel} operations.
The original \PTRACER work reports an average slowdown of 6.7X, compared to \ptracernotoolperfoverheadaravalli in our experiments.
In addition to the differences in the evaluation platforms and bug fixes, different execution times of the benchmarks in the absence of perfect scaling will lead to different relative overheads.
We have verified that our modifications have minimal ($\leq$1\% difference) impact on the performance of \ptracer.


\FT successfully executed with four benchmarks that required less memory, \blackscholes, \fluidanimate, \karatsuba, and \sort, and got killed on the other benchmarks.
The overhead of \FT for the four benchmarks is \ftnotoolperfoverheadaravalli, and \fastpar outperforms \ft by \ftfastparspeeduparavalli.
Given that \fastpar builds on \ft,
this result shows that the
task and variable vector clock
optimizations proposed in \fastpar are effective in reducing both run time and space overheads.

\begin{small}
    \begin{table}
        \centering
        \begin{tabular}{lrrrrr}
                                  & \textbf{UM} & \textbf{FT} & \textbf{CR} & \textbf{PT} & \textbf{FR} \\
            \toprule
            \bench{blackscholes}  & 0.62        & 9.64        & 10.38       & 54.78       & 48.03       \\
            \bench{fluidanimate}  & 0.51        & 69.93       & 0.58        & 9.17        & 2.04        \\
            \bench{streamcluster} & 0.84        & -           & 1.17        & 22.76       & 6.91        \\
            \bench{swaptions}     & 0.19        & -           & 0.98        & 17.34       & 3.87        \\
            \midrule
            \bench{convexHull}    & 1.69        & -           & 2.97        & 19.94       & 12.71       \\
            \bench{delRefine}     & 3.71        & -           & 6.25        & 125.37      & 125.39      \\
            \bench{delTriang}     & 3.63        & -           & 8.19        & 113.69      & 106.77      \\
            \bench{nearestneigh}  & 2.19        & -           & 5.60        & 106.86      & 125.26      \\
            \bench{rayCast}       & 1.28        & -           & 4.08        & 18.37       & 14.35       \\
            \midrule
            \bench{karatsuba}     & 0.02        & 0.86        & 0.18        & 9.34        & 2.81        \\
            \bench{kmeans}        & 0.11        & -           & 2.52        & 15.82       & 5.99        \\
            \bench{sort}          & 0.15        & 18.96       & 0.20        & 9.87        & 2.40        \\
        \end{tabular}
        \caption{Comparison of the memory overhead (in GBs).%
        }
        \label{tab:memory}
    \end{table}
\end{small}

\paragraph*{Memory overhead}

Table~\ref{tab:memory} compares the peak memory requirement of each benchmark with the four techniques, as reported by the Massif tool in Valgrind~\cite{valgrind}.
Column 2, \emph{UM}, shows the memory requirement of the unmodified
application, while \emph{FT}, \emph{CR}, \emph{PT}, and \emph{FR} stand for \ft, \cracer, \ptracer, and \fastpar,
respectively. The results show that using IVCs for encoding task inheritance in \fastpar compared to
the fine-grained DPST structure in \ptracer provides significant memory savings in maintaining
per-task metadata, especially for \fluidanimate, \karatsuba, \kmeans, \sort, and \swaptions. 
\CRACER usually requires the least memory since it maintains order maintenance structures, but has
high performance overhead compared to \fastpar.

\begin{figure*}[ht]
    \centering
    \begin{subfigure}{.24\textwidth}
        \centering
        \includegraphics[width=1\linewidth]{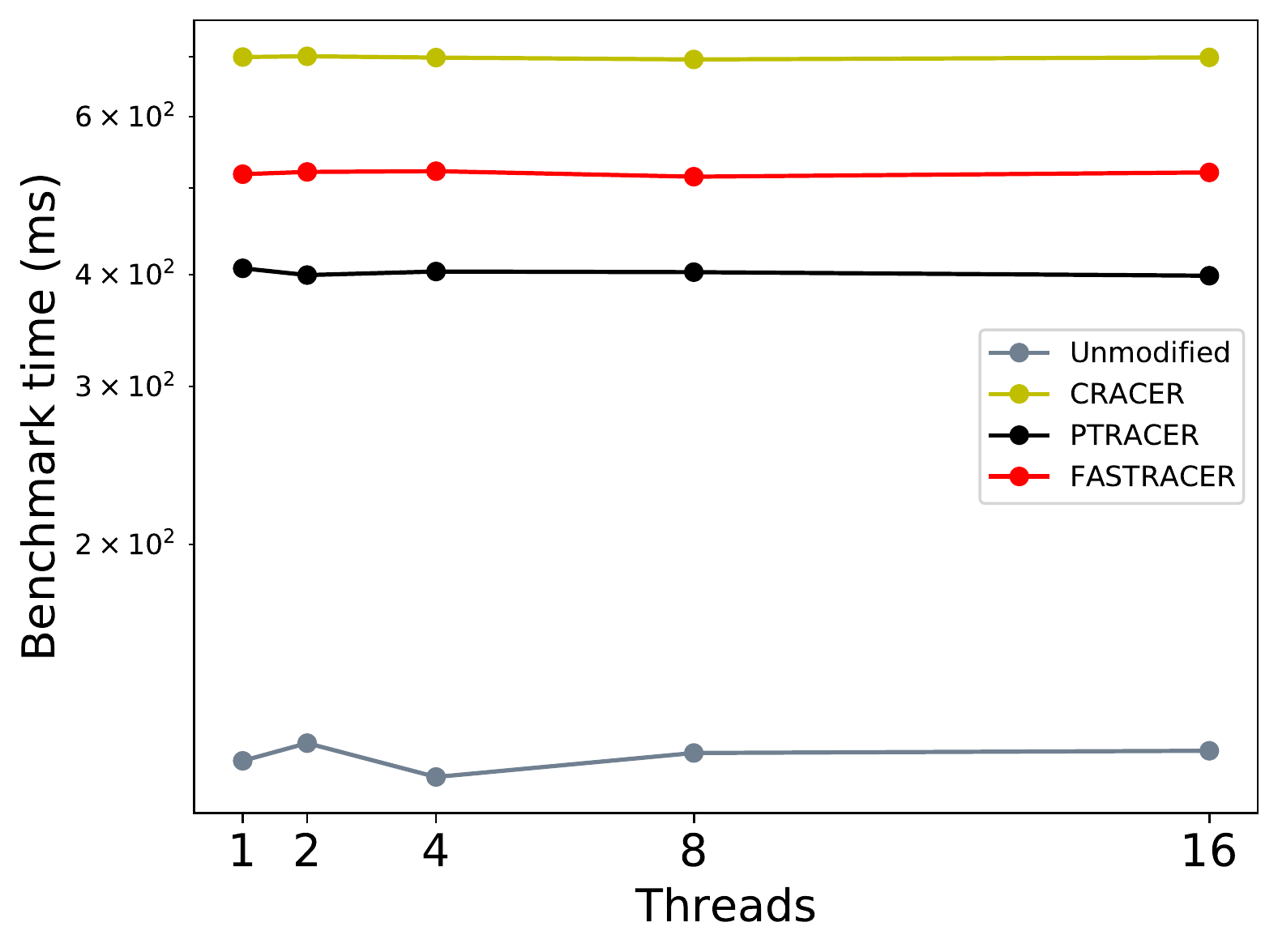}
        \caption{\blackscholes}
        \label{fig:blackscholes}
    \end{subfigure}%
    \begin{subfigure}{.24\textwidth}
        \centering
        \includegraphics[width=1\linewidth]{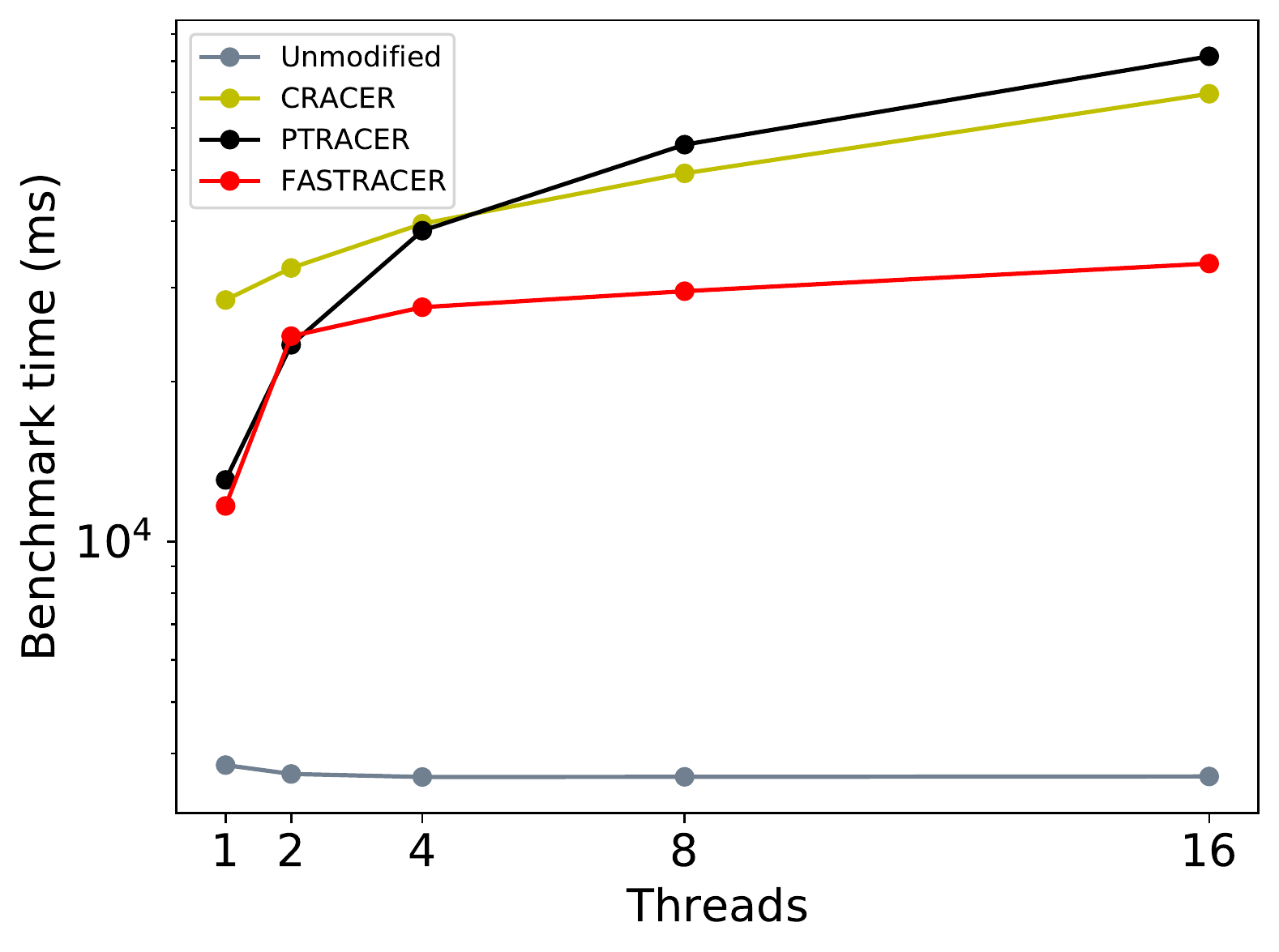}
        \caption{\convexhull}
        \label{fig:convexhull}
    \end{subfigure}
    \begin{subfigure}{.24\textwidth}
        \centering
        \includegraphics[width=1\linewidth]{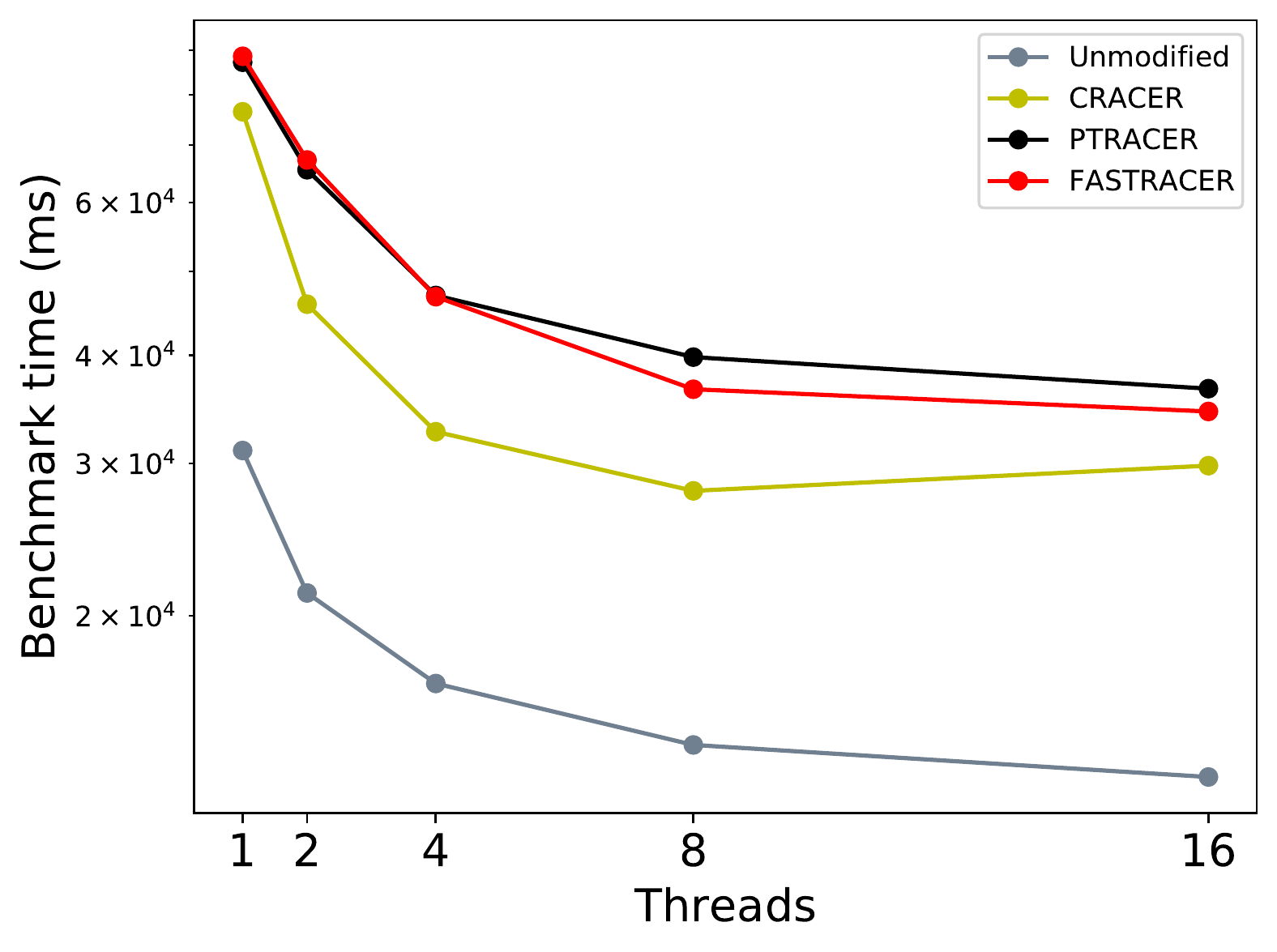}
        \caption{\delrefine}
        \label{fig:delrefine}
    \end{subfigure}
    \begin{subfigure}{.24\textwidth}
        \centering
        \includegraphics[width=1\linewidth]{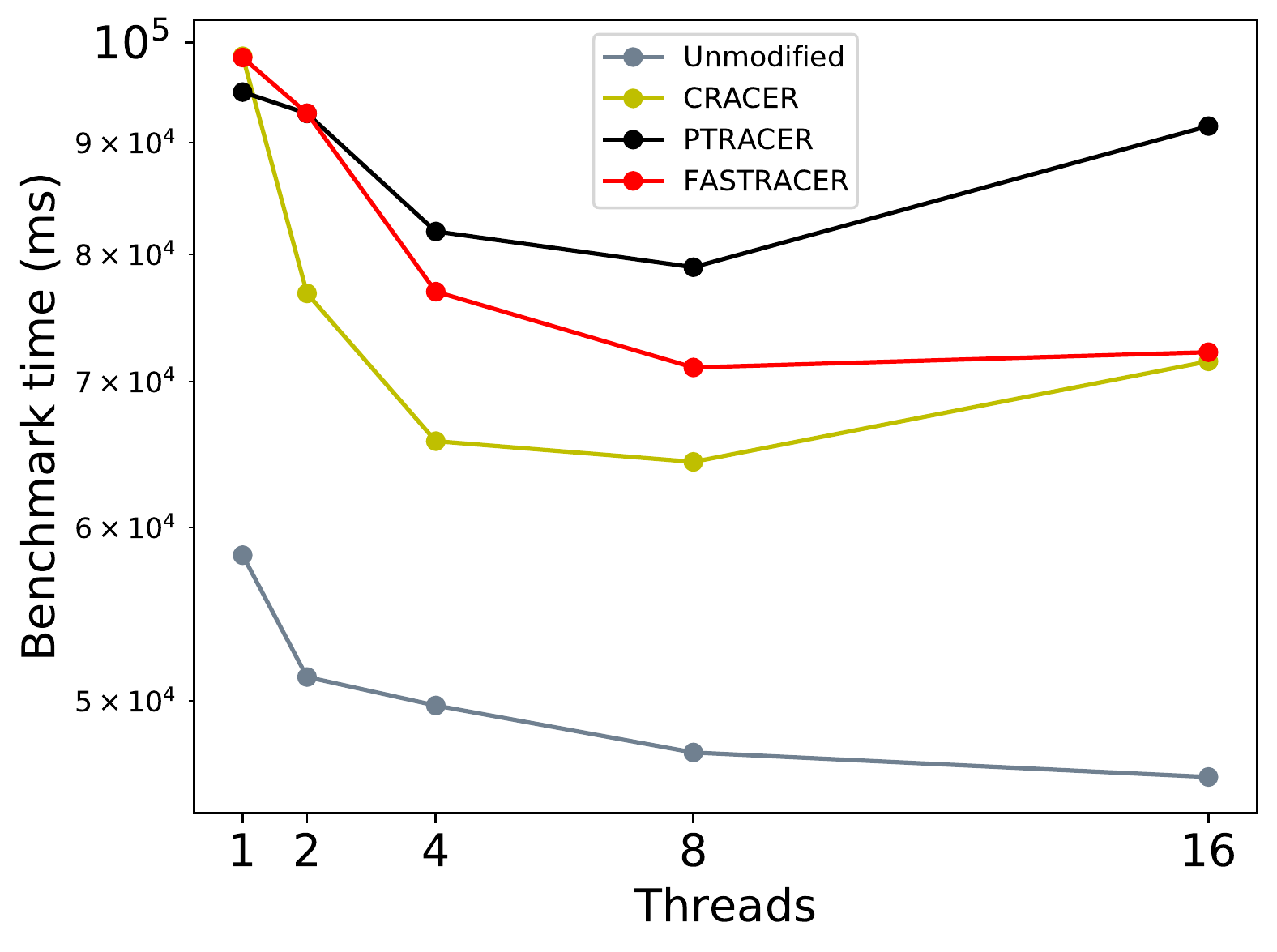}
        \caption{\deltriang}
        \label{fig:deltriang}
    \end{subfigure}\\
    \begin{subfigure}{.24\textwidth}
        \centering
        \includegraphics[width=1\linewidth]{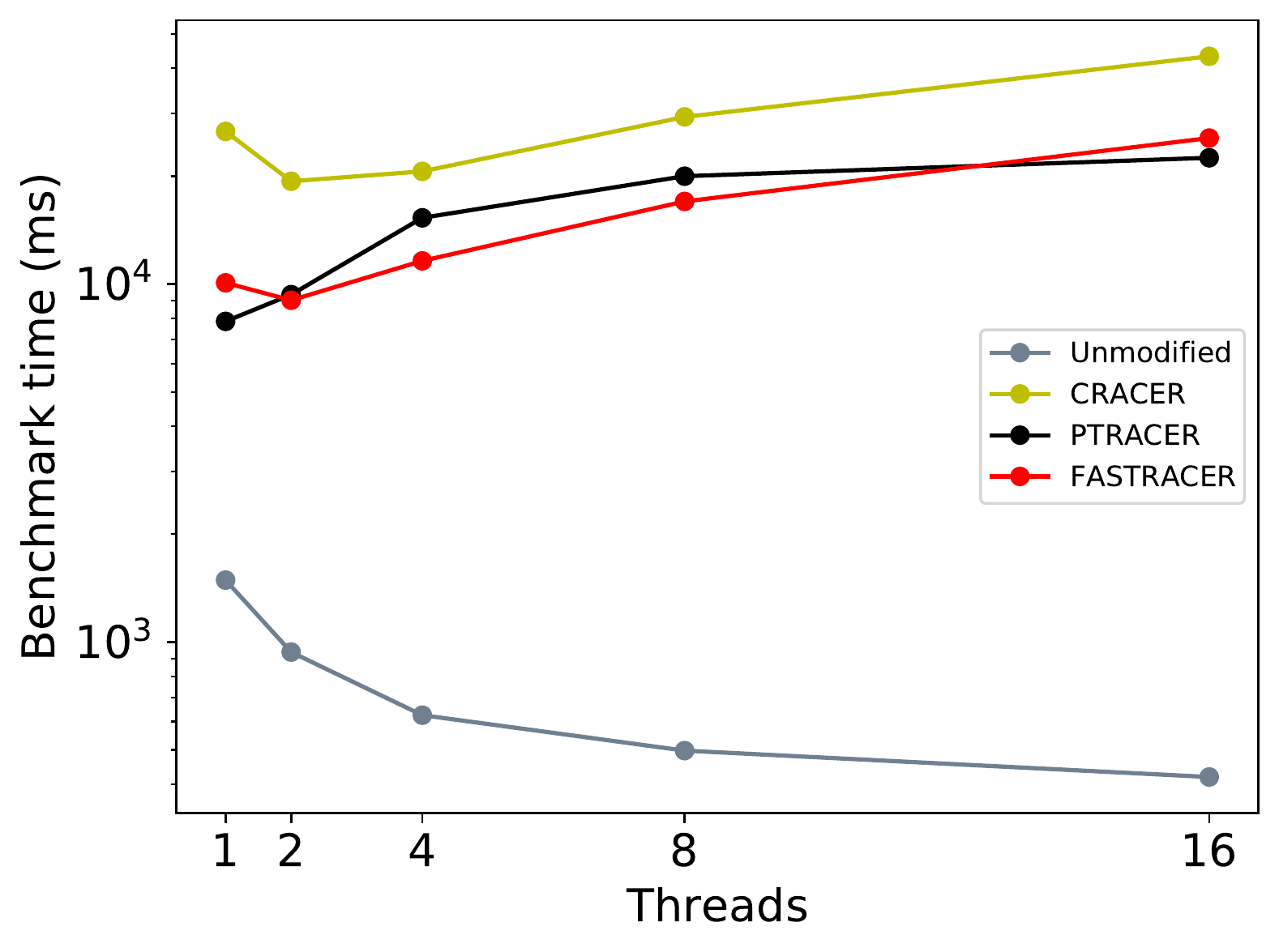}
        \caption{\fluidanimate}
        \label{fig:fluidanimate}
    \end{subfigure}
    \begin{subfigure}{.24\textwidth}
        \centering
        \includegraphics[width=1\linewidth]{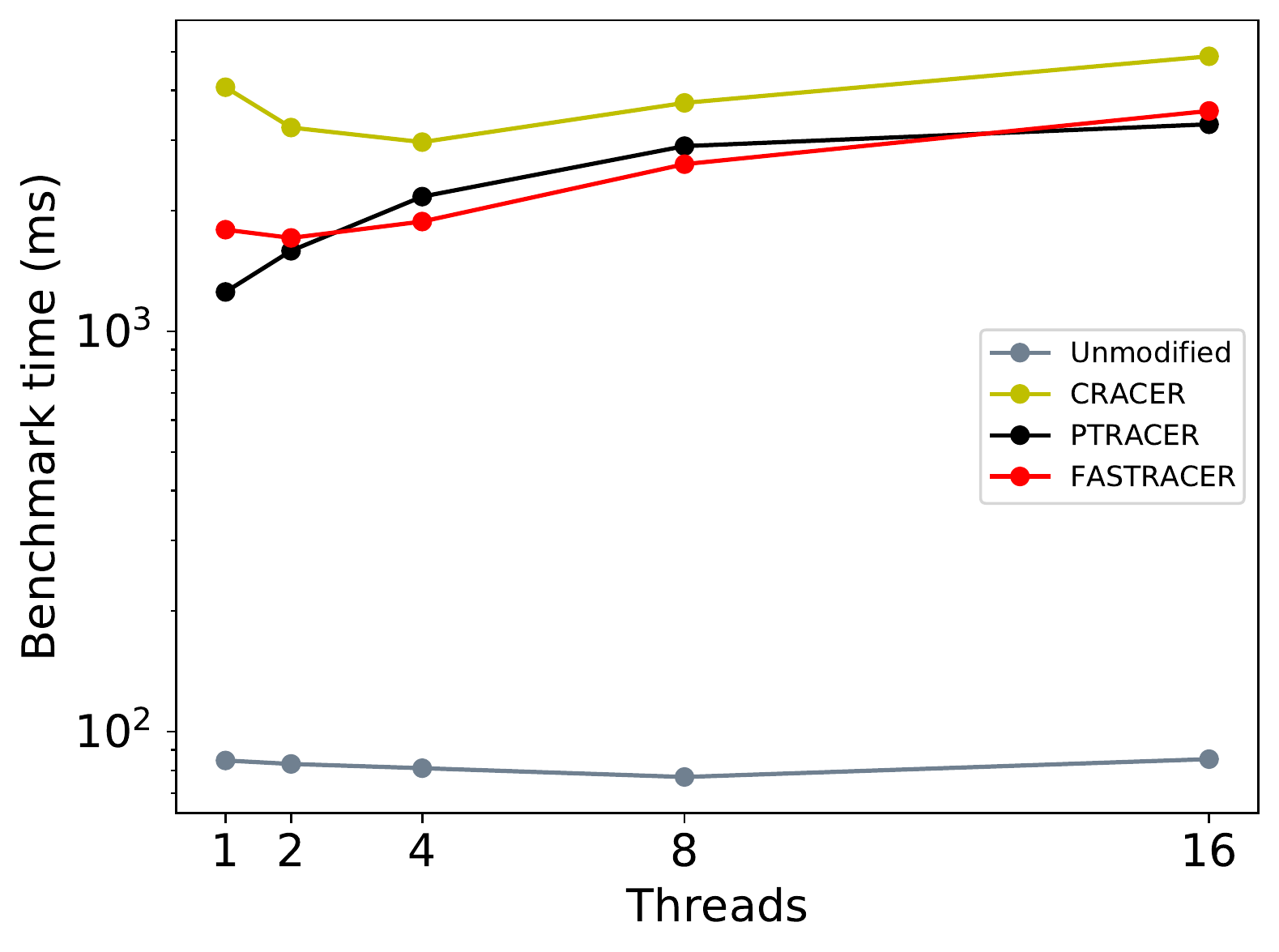}
        \caption{\karatsuba}
        \label{fig:karatsuba}
    \end{subfigure}
    \begin{subfigure}{.24\textwidth}
        \centering
        \includegraphics[width=1\linewidth]{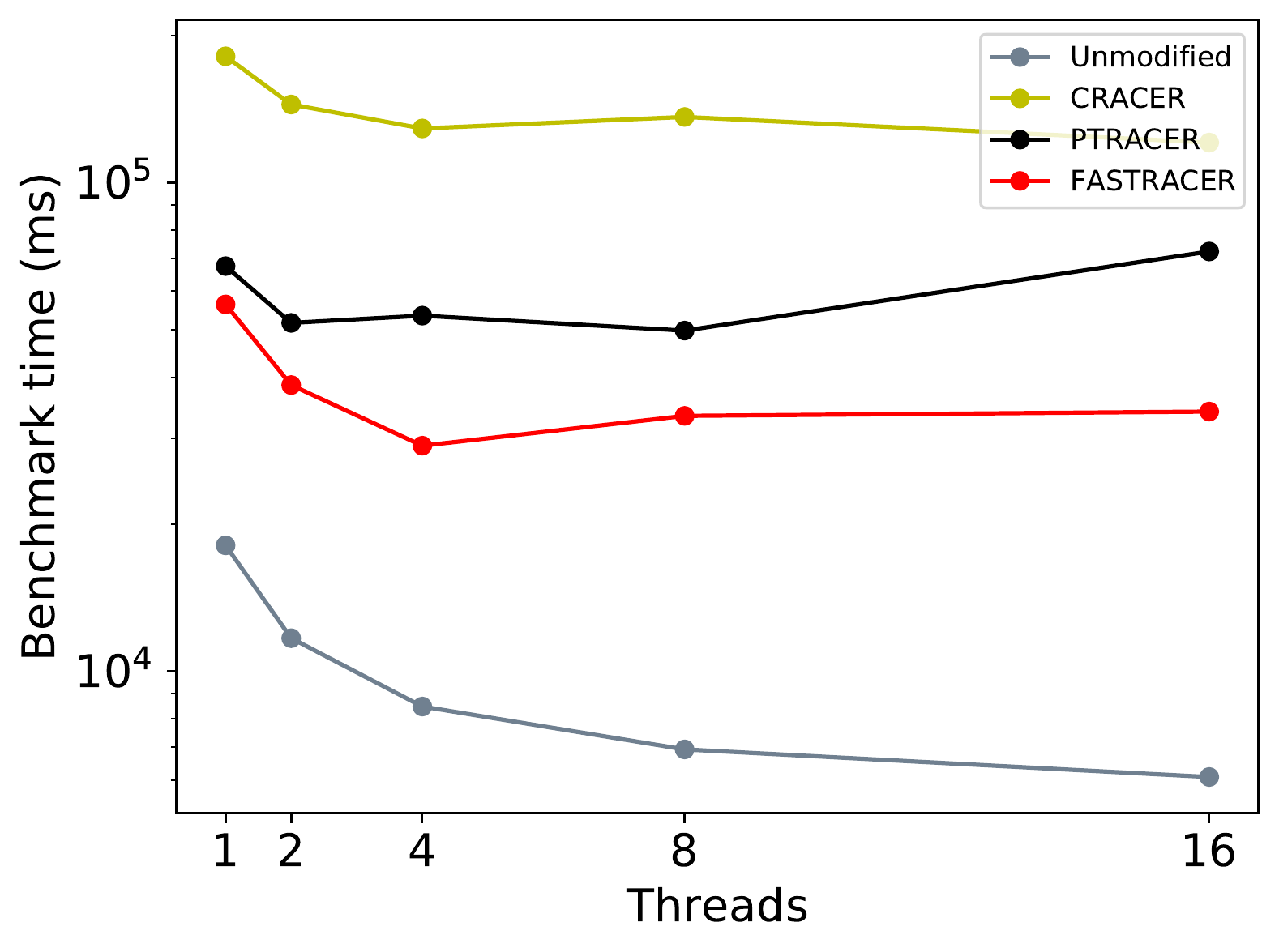}
        \caption{\kmeans}
        \label{fig:kmeans}
    \end{subfigure}
    \begin{subfigure}{.24\textwidth}
        \centering
        \includegraphics[width=1\linewidth]{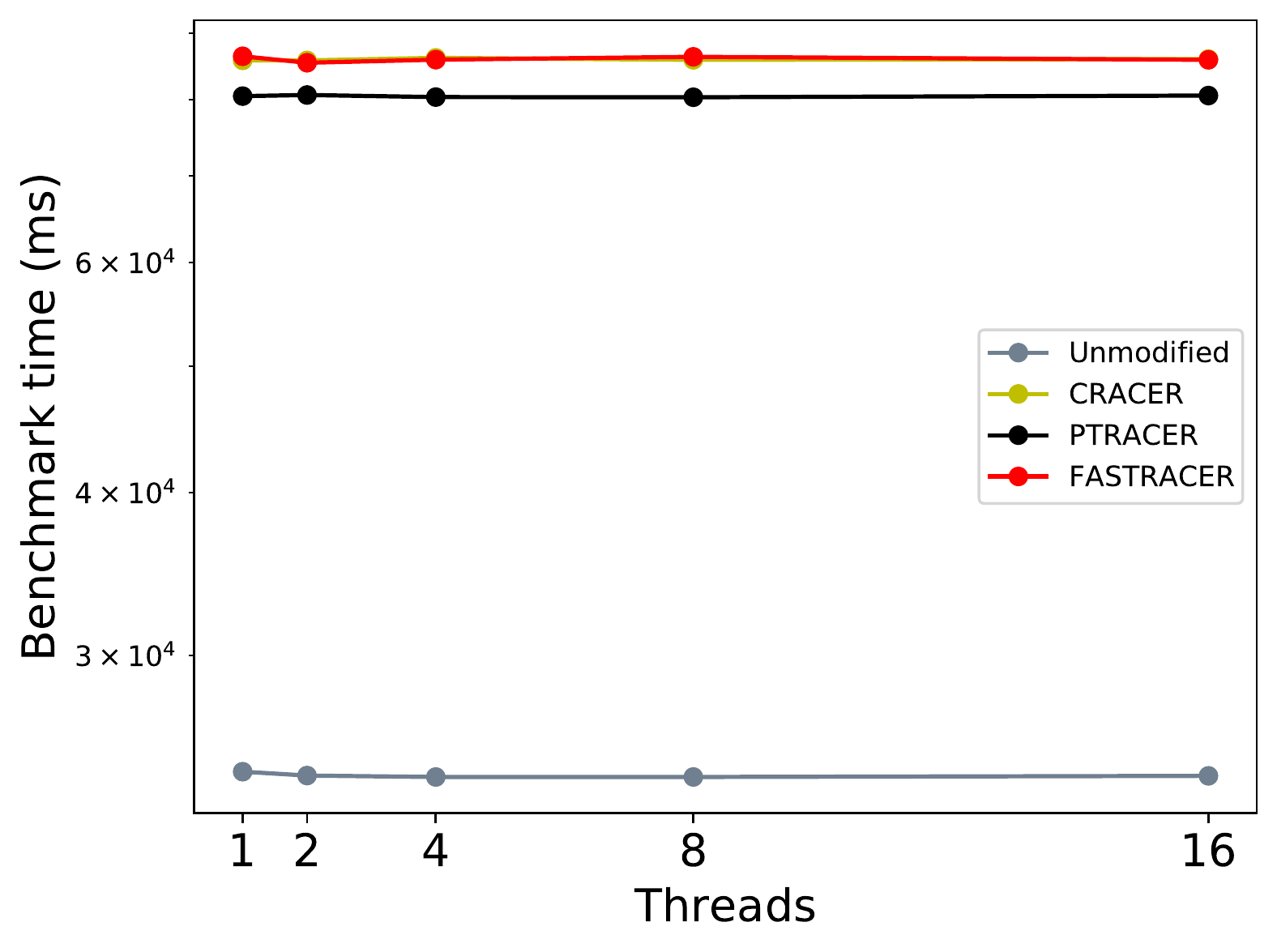}
        \caption{\nn}
        \label{fig:nn}
    \end{subfigure}\\
    \begin{subfigure}{.24\textwidth}
        \centering
        \includegraphics[width=1\linewidth]{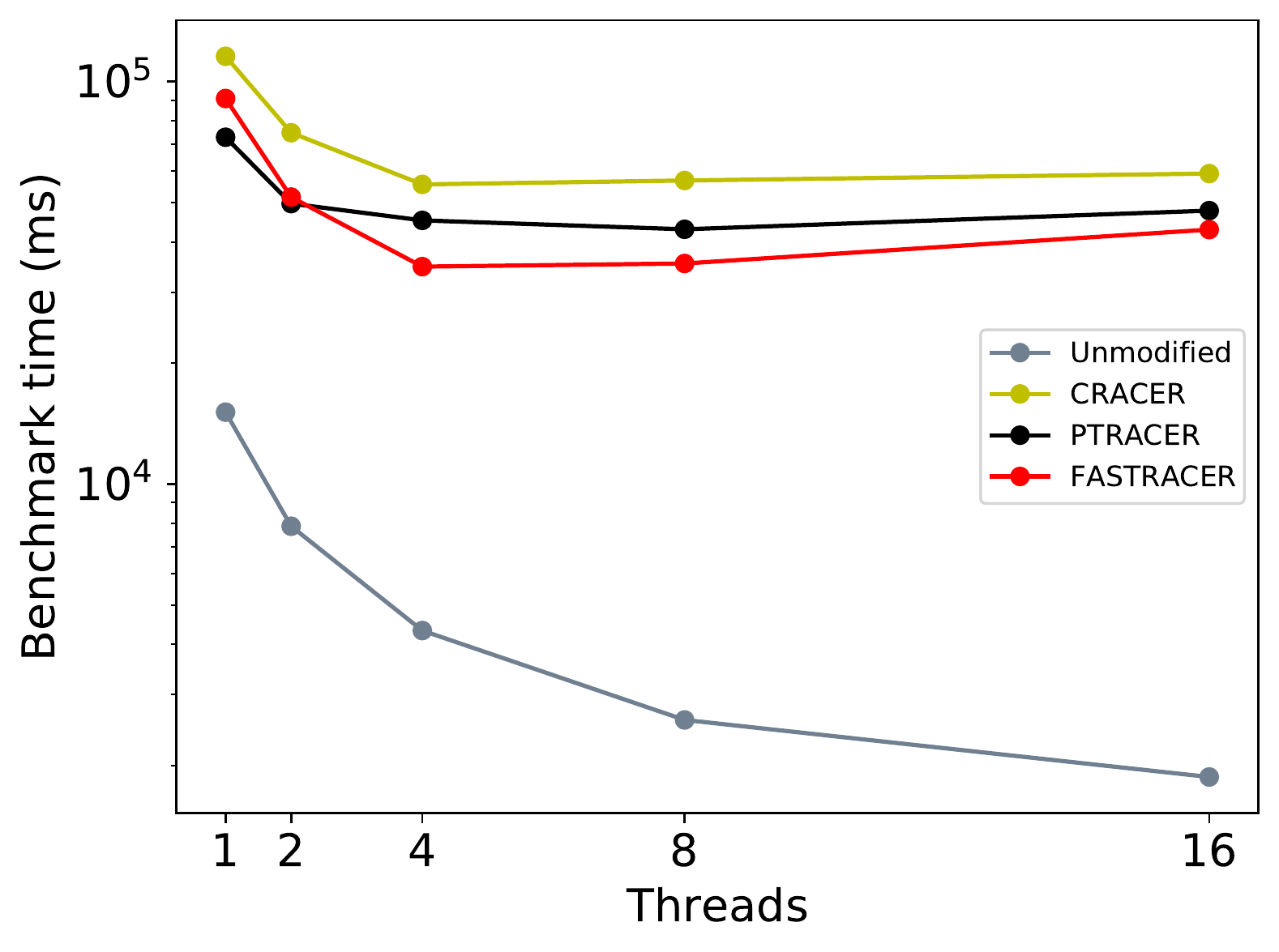}
        \caption{\raycast}
        \label{fig:raycast}
    \end{subfigure}
    \begin{subfigure}{.245\textwidth}
        \centering
        \includegraphics[width=1\linewidth]{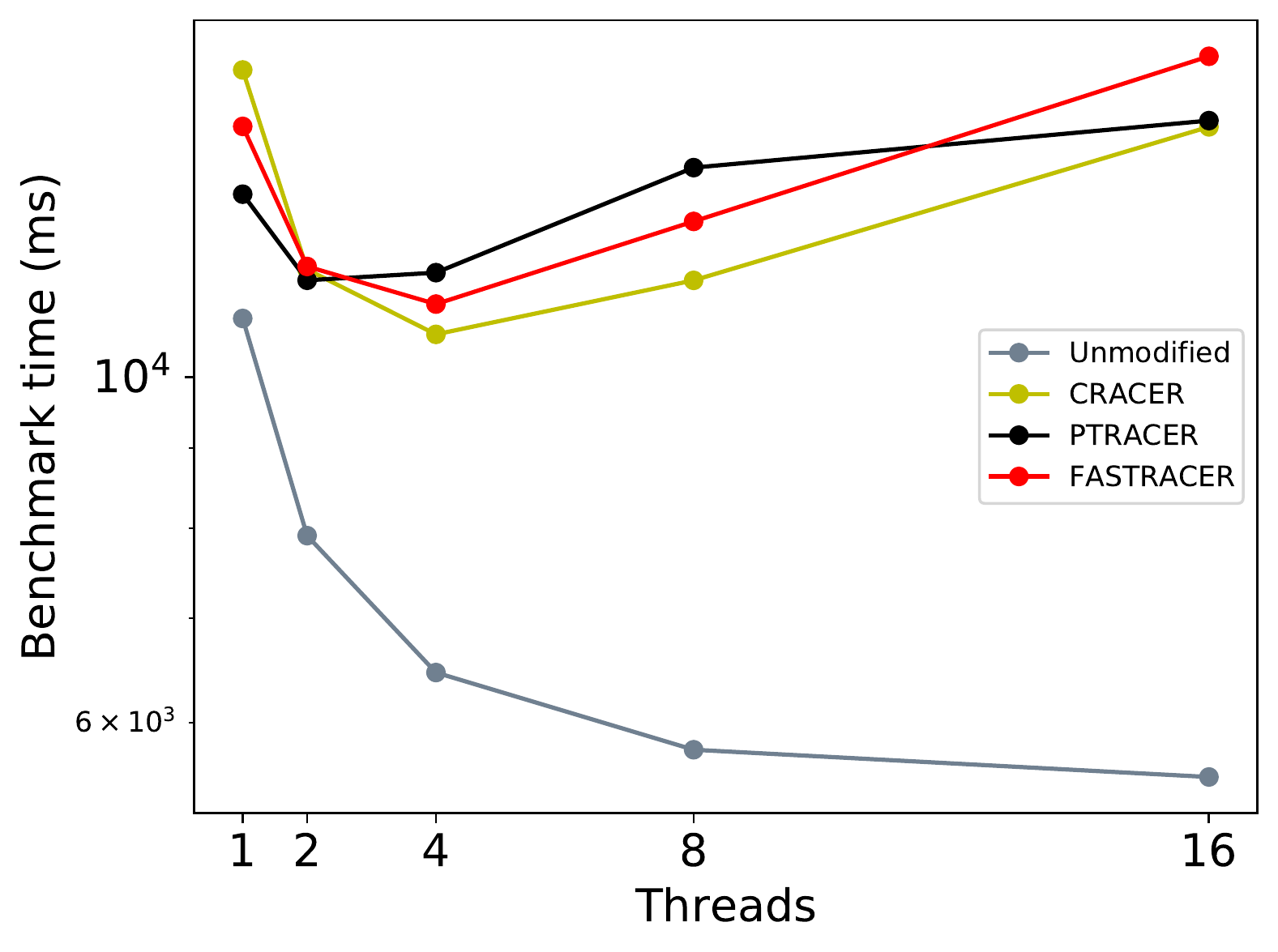}
        \caption{\sort}
        \label{fig:sort}
    \end{subfigure}
    \begin{subfigure}{.245\textwidth}
        \centering
        \includegraphics[width=1\linewidth]{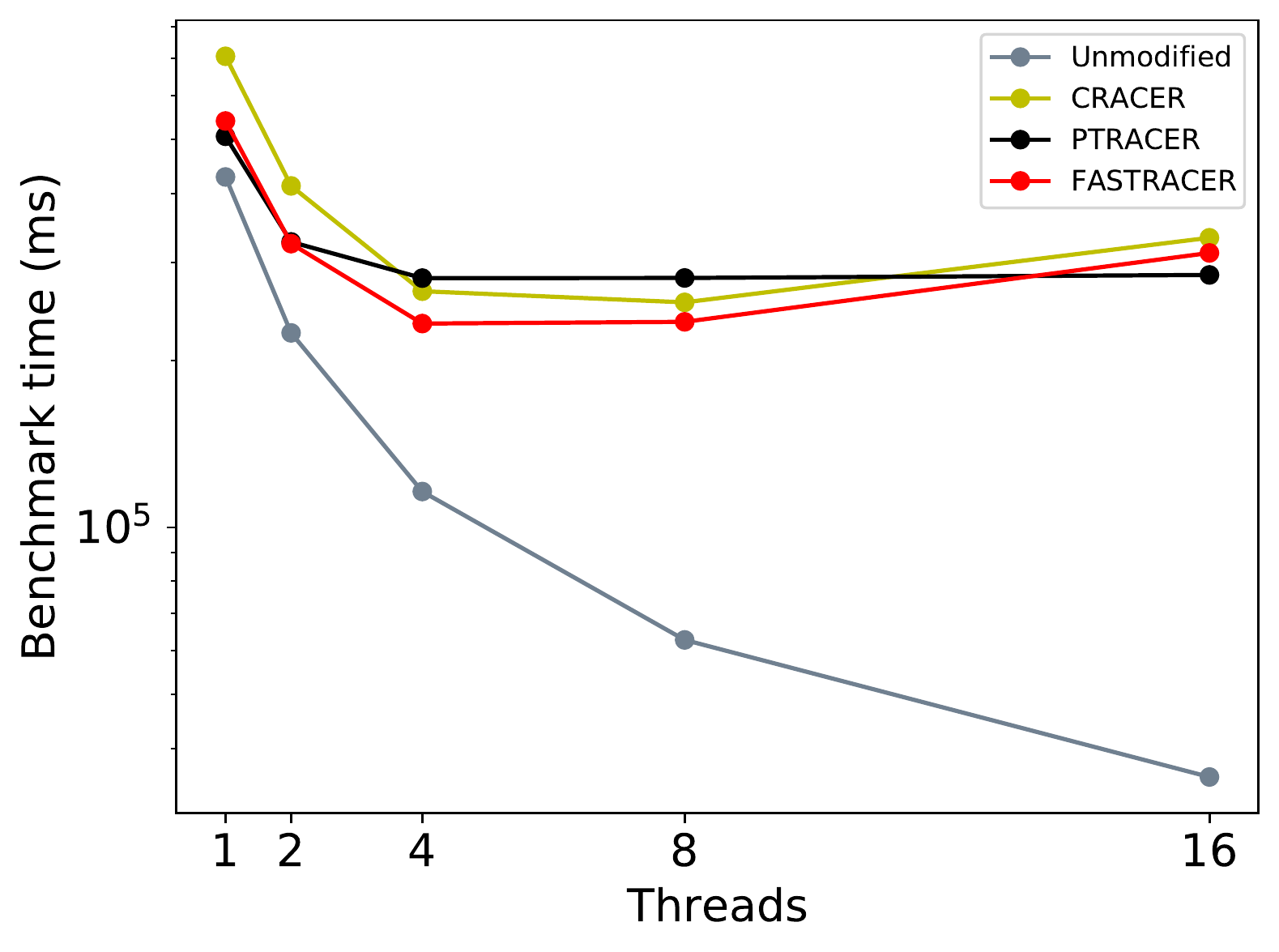}
        \caption{\streamcluster}
        \label{fig:streamcluster}
    \end{subfigure}
    \begin{subfigure}{.245\textwidth}
        \centering
        \includegraphics[width=1\linewidth]{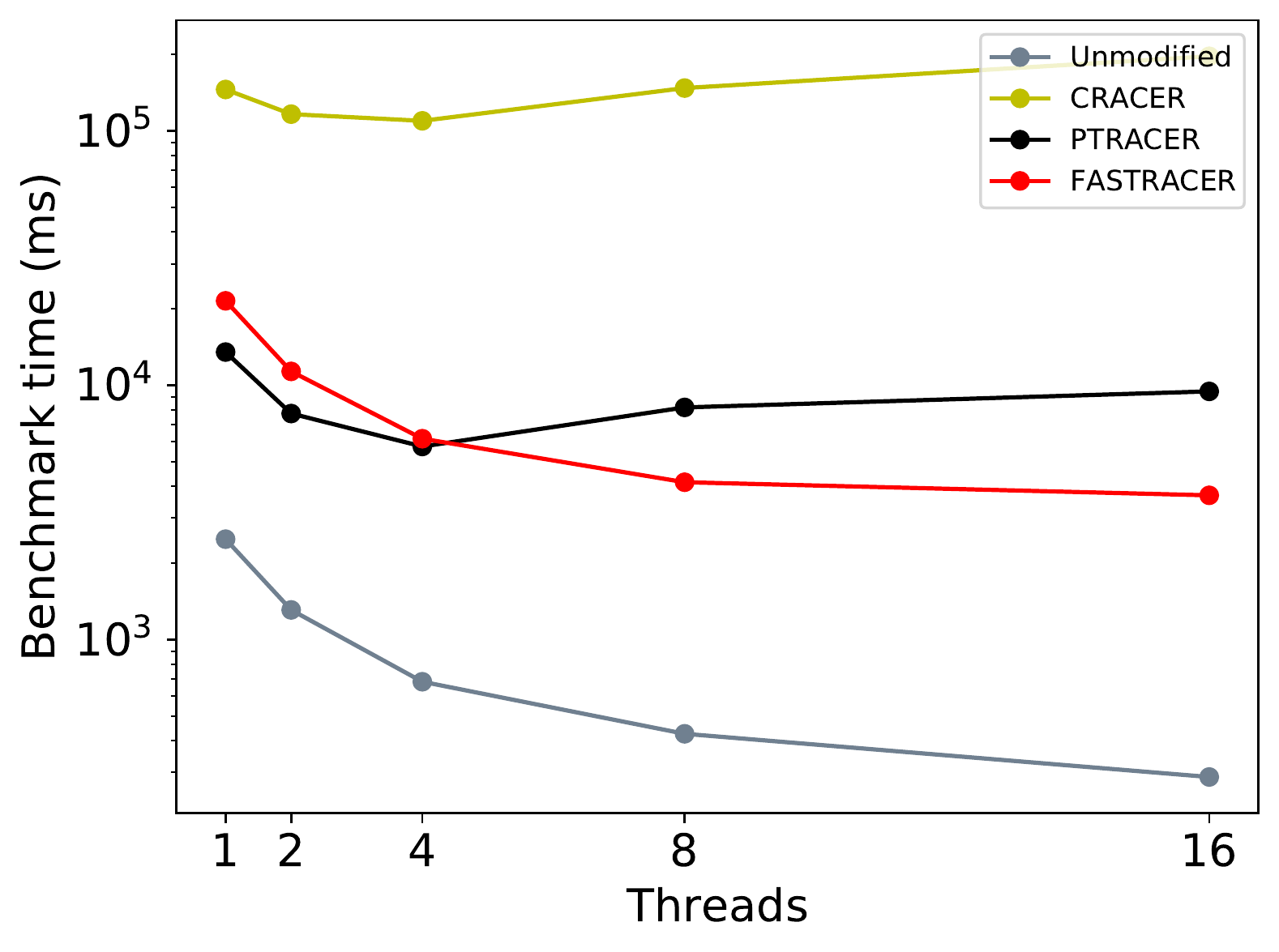}
        \caption{\swaptions}
        \label{fig:swaptions}
    \end{subfigure}
    \caption{Scalability results of the benchmarks on the Intel Gold platform described in \Cref{sec:eval:setup}.}
    \label{fig:scalability}
\end{figure*}

\paragraph*{Scalability}

Figure~\ref{fig:scalability} shows the scalability plots of the benchmarks as we vary the number of threads (in powers of two) used by the TBB scheduler.
Other experimental setup are the same as in Figure~\ref{fig:aravalli-norm-bench-time}.
The \emph{Unmodified} configuration in the figure
shows that most applications scale well, excepting
\blackscholes, \convexhull, and \karatsuba.
\FastPar scales better than \ptracer for \deltriang, \kmeans, and \swaptions, for the range of thread counts we have evaluated.
The scaling behavior of \cracer, \PTRACER, and \fastpar are very similar for the remaining benchmarks.



\paragraph*{Platform sensitivity}


We evaluate the sensitivity of the optimizations by re-running experiments
on an Intel Xeon Silver 4114 system with two ten core processors
(twenty 2.0~GHz cores in total)
with hyperthreading turned off, 128~GB primary memory, running Ubuntu Linux 18.04.6
LTS with kernel version 4.15.0.
The rest of the methodology is the same as Figure~\ref{fig:aravalli-norm-bench-time}.
\CRACER, \ptracer, and \fastpar incur overheads of \cracernotoolperfoverheadhimalaya,
\ptracernotoolperfoverheadhimalaya, and \fastparnotoolperfoverheadhimalaya, respectively.
Although, the overheads on the multi-socket Xeon Silver platform are slightly greater compared to the Xeon Gold platform, 
the general trend is the same.
\fastpar outperforms both \cracer and \ptracer by
\fastparcracerspeeduphimalaya and \fastparptracerspeeduphimalaya, respectively.
The study confirms the effectiveness of the proposed optimizations \emph{across} different platforms.

\medskip
\noindent \FastPar outperforms prior work on the \emph{same} set of benchmarks as used in \ptracer. Given the generic nature of the optimizations, we expect a similar qualitative trend for benchmarks where threads join with arbitrary ancestors (\eg, \asyncfinish semantics). More importantly, \fastpar disproves the assumption made in \emph{all} prior work that vector-clock-based analysis is not suited for task-based programs.

\subsection{Data Races and Run-time Statistics}
\label{sec:eval:stats}

Table~\ref{tab:run-time-stats} summarizes the run-time statistics. The data is the average from 10
trials with a statistic-collecting configuration. Columns 2--4 in the table show the average number
of tasks spawned by the benchmarks and the number of read and write accesses instrumented
by the static compiler pass in LLVM. Columns 5--7 show the number of data races reported by the
different tools. To stress-test the correctness of our implementations, we introduced data races in
a few benchmarks that already did not have known races. The suffix ``-r'' denotes benchmarks that
have been modified to introduce races for evaluation. All the tools are expected to report the same
number of races for a given application with a fixed input. \CRACER (CR), \PTRACER (PT), and
\fastpar (FR) report the same violations for all the benchmarks. Please note that the
evaluation includes our fixes to \ptracer (\Cref*{sec:impl}).

\begin{small}
    \begin{table}
        \setlength{\tabcolsep}{3pt}
        \centering
        \begin{tabular}{lrrrrrrr}
                            & {\textbf{\# Tasks}} & \multicolumn{2}{c}{\textbf{ACC} ($\times 10^{6}$)} & \multicolumn{3}{c}{\textbf{Data Races}}                                           \\
            \cline{3-4}
            \cline{5-7}
                            & ($\times 10^{3}$)   & \textbf{RDs}                                       & \textbf{WRs}                            & \textbf{CR} & \textbf{PT} & \textbf{FR} \\
            \toprule

            \blackscholes   & 0.20                & 90                                                 & 50                                      & 21          & 21          & 21          \\


            \fluidanimater  & 1.60                & 26                                                 & 0.7                                     & 40          & 40          & 40          \\

            \streamclusterr & 180                 & 363                                                & 13                                      & 80          & 80          & 80          \\

            \swaptions      & 960                 & 77                                                 & 77                                      & 0           & 0           & 0           \\

            \midrule

            \convexhull     & 8.50                & 30                                                 & 0                                       & 0           & 0           & 0           \\

            \delrefine      & 1000                & 15                                                 & 0                                       & 0           & 0           & 0           \\

            \deltriang      & 790                 & 30                                                 & 20                                      & 0           & 0           & 0           \\

            \nn             & 2800                & 51                                                 & 8                                       & 0           & 0           & 0           \\

            \raycast        & 1900                & 160                                                & 0                                       & 0           & 0           & 0           \\

            \midrule

            \karatsuba      & 1.98                & 3.4                                                & 0.8                                     & 0           & 0           & 0           \\

            \kmeans-r       & 35                  & 570                                                & 10                                      & 75          & 75          & 75          \\

            \sort           & 0.70                & 11                                                 & 0.06                                    & 1024        & 1024        & 1024        \\
        \end{tabular}
        \caption{Run-time statistics across different benchmarks.}
        \label{tab:run-time-stats}
    \end{table}
\end{small}


\section{Related Work}
\label{sec:relatedwork}

In the following, we discuss other related work that has not already been discussed.

\paragraph*{Race detection for task-based programs}

Mellor-Crummey 
exploited the structural property of fork-join programs to show that tracking two readers and a single writer per memory location is sufficient
for sound data race detection~\cite{mellor-crumney-sc-1991}. Since then, there has been much work to design 
race detection algorithms to utilize the serial-parallel (SP) structure of programs with constant space overhead for metadata~\cite{spbags,cilk-spaa-98}.
ESP-bags is an extension to SP-bags that supports the \pcode{finish} construct in \asyncfinish programs~\cite{esp-bags-2010}. However, these approaches constrain the program to execute serially in depth-first order, which does not scale. \textsc{Tardis} does not keep track of the SP 
relationships among program strands~\cite{tardis-pldi-2014}. Instead, \textsc{Tardis} maintains log-based access sets and lazily detects races by checking for overlapping intersections of access sets of logically parallel sub-computations at join points.


\paragraph*{Race detection for multithreaded programs}


Static analysis can potentially detect all \emph{feasible} data races across all
possible executions (\ie, no false negatives), but usually do not scale to large programs and
suffer from false positives, which developers loathe
~\cite{naik-static-racedet-2007,naik-static-racedet-2006,racerd}.
Dynamic analyses have the potential to be sound and
precise for the \emph{observed} executions.
Many dynamic data race detection analyses track the happens-before relation to infer data races~\cite{pacer-pldi-2010,fib-oopsla-2017,google-tsan-v1,google-tsan-v2,multirace}.
\emph{Lockset} analysis reports data races when a locking discipline is violated, but can report
false races~\cite{eraser,object-racedet-2001,choi-racedet-2002,ocallahan-hybrid-racedet-2003}. Hybrid techniques integrate both HB and lockset analysis~\cite{ocallahan-hybrid-racedet-2003,racetrack}, but continue to suffer from the
disadvantages of both techniques.
Other techniques sacrifice soundness for performance by sampling memory
accesses~\cite{datacollider,racechaser-caper-cc-2017,pacer-pldi-2010,literace} or require hardware
support to speed up the race detection analysis~\cite{radish,lard,parsnip}.
Prior analyses have also explored detecting a subset of data races that arise due to conflicts among overlapping regions~\cite{conflict-exceptions,valor-oopsla-2015,ifrit}.
Data race detection analyses have also been proposed for parallel programming models such as OpenMP~\cite{archer,romp-sc-2018}
and GPUs~\cite{barracuda-pldi-2017,scord-isca-2020,iguard-sosp-21}.

\paragraph*{Improve race detection coverage}

Recent work has explored exposing as many true races as possible by observing one or a few dynamic
executions. The idea is to
explore different valid interleavings to maximize detecting data races that can occur across
schedules. Many techniques perturb the execution in an attempt to break spurious HB
relations~\cite{racageddon,rvpredict-pldi-2014,racefuzzer}. \emph{Predictive}
techniques aim to detect data races that can occur in other 
correct
reorderings of memory accesses by observing one dynamic execution.
Such techniques use partial order relations that are weaker than HB to allow reconstructing other valid memory
reorderings~\cite{cp-popl-2012,vindicator-pldi-2018,wcp-pldi-2017,smarttrack}.





\section{Conclusion\label{sec:conclusion}}





Whereas prior work has overlooked the possibility of using vector clocks for task-based programs~\cite{cracer-spaa-2016,ptracer-fse-2016,spd3-pldi-2012}, \fastpar shows how to achieve competitive performance
with a curated set of optimizations.
\FastPar introduces novel optimizations that reduce the metadata redundancy and alleviate the performance bottlenecks with task vector clocks.
\FASTPAR also exploits the structured execution of \asyncfinish programs to limit the per-variable metadata overhead by using a coarse-but-efficient encoding of task inheritance relationships.
\FASTPAR shows substantial performance improvements over \ft,
and outperforms state-of-art approaches \cracer and \ptracer.
\FastPar's proposed vector clock optimizations allow for efficient dynamic data race detection for task-based \asyncfinish programs.


\begin{acks}
    We thank Michael D. Bond for feedback on the text,
    the \ptracer authors for help with their implementation,
    and Vivek Kumar for discussions and advice.

    This material is based upon work supported by IITK Initiation Grant and \grantsponsor{GS100000001}{Science and Engineering Research Board}{} under Grant \grantnum{GS100000001}{SRG/2019/000384}.
\end{acks}

\iftoggle{ieeeFormat}{
    \section*{Acknowledgment}
    We thank Adarsh Yoga, Santosh Nagarakatte, and Aarti Gupta for their help with the \ptracer implementation.
    Thank Mike Bond and Vivek Kumar.
}{}


\iftoggle{acmFormat}{
    \bibliographystyle{ACM-Reference-Format}
}{}
\iftoggle{ieeeFormat}{
    \balance
    \bibliographystyle{IEEEtran}
}{}
\iftoggle{lncsFormat}{
    \bibliographystyle{./europar22/splncs04}
}{}
\bibliography{./references/venue-abbrv,./references/references}

\iftrue
    \iftoggle{acmFormat}{
        \appendix
        \section{Appendix}

\subsection{Correctness of \fastpar}


\begin{lemma}
    Storing access entries of only two tasks such that their LCA using the inheritance tree is the highest is sufficient. That is, a future access to the variable, which is racy with any of the existing parallel accesses, must be racy with any one of the two access entries with the highest LCA.
\end{lemma}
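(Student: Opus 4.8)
The plan is to reduce the statement to a purely structural fact about the inheritance tree, and for that I first need a characterization of happens-before for \asyncfinish executions in terms of that tree. \emph{Step~1 (structural HB).} Fix two accesses, in tasks $X$ and $Y$, and let $\ell=\mathsf{LCA}(X,Y)$ in the inheritance tree. Because every join edge of a terminally-strict \asyncfinish computation returns to an ancestor, any directed path in the computation DAG connecting the two accesses must pass through $\ell$, and can do so only once; hence $\hb{X}{Y}$ is decided \emph{at $\ell$} by (i) whether $\ell$ spawns its child leading to $X$ before its child leading to $Y$ and (ii) whether the \pcode{finish} scope of $\ell$ that encloses the former closes before the latter is spawned (with the obvious readings when $\ell\in\{X,Y\}$). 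I would prove this by the path-uniqueness argument just described, or cite the analogous property of the DPST from \SPD~\cite{spd3-pldi-2012}.

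\emph{Step~2 (reduction to three tasks).} A future access $W$ is observed after every already-stored access, so $W$ can never happen-before a stored access; therefore ``$W$ races with access $a$'' is equivalent to ``$\neg\,\hb{a}{W}$''. I then maintain the invariant that, for a fixed lockset, the two stored entries \emph{dominate} all prior parallel accesses, meaning every future $W$ that races with some prior access races with a stored one. The base case ($\le 2$ accesses) is immediate; the ``replace the stored entry that happens-before the new access'' step preserves domination by transitivity of $\rightarrow_{hb}$; and the only remaining algorithmic step forms three pairwise-concurrent tasks and retains the two whose inheritance-tree LCA is highest. So it suffices to prove: for pairwise-concurrent $T_a,T_b,T_c$ (same lockset) with $\mathsf{LCA}(T_a,T_b)=\mathsf{LCA}(T_b,T_c)=g$ the common ancestor of all three while $\mathsf{LCA}(T_a,T_c)=h$ lies strictly below $g$, if we keep $T_a,T_b$ and discard $T_c$ then $\hb{T_a}{W}\wedge\hb{T_b}{W}\implies\hb{T_c}{W}$ for every future $W$.

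\emph{Step~3 (the three-task claim).} This carries the real content. I first show $W$ cannot lie in the subtree rooted at $h$: otherwise $\mathsf{LCA}(T_b,W)=g$, and the child of $g$ leading to $W$ would coincide with the child leading to $T_a$ (both route through $h$); by Step~1, $\hb{T_b}{W}$ would then require exactly the spawn/\pcode{finish} ordering at $g$ that concurrency of $T_b$ with $T_a$ rules out --- a contradiction, with the degenerate case $g=T_b$ excluded because it would force $\hb{T_b}{T_a}$. Hence $W$ lies outside the subtree of $h$, which pins $\mathsf{LCA}(T_a,W)=\mathsf{LCA}(T_c,W)=:\ell'$ with $\ell'$ a strict ancestor of $h$, and makes the child of $\ell'$ towards $T_a$ equal to the child towards $T_c$ (both lead to $h$). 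Consequently the same spawn/\pcode{finish} witness at $\ell'$ that establishes $\hb{T_a}{W}$ establishes $\hb{T_c}{W}$, proving the claim; the lemma then follows from the domination invariant. I expect the main obstacle to be making Step~1 precise in the presence of nested \pcode{finish} scopes, and then cleanly discharging the degenerate cases in Step~3 in which an LCA equals one of the three tasks or equals $W$.
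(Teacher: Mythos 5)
Your route is genuinely different from the paper's: the paper argues by contradiction through \fastpar's \pcode{joined}-set mechanism at a common ancestor, whereas you characterize happens-before structurally at the LCA via \pcode{finish} scopes, in the style of \spd. Steps 1 and 3 are essentially sound for the configuration you treat, but there is a genuine gap in Step 2: your reduction only covers the configuration in which $\mathsf{LCA}(T_a,T_c)$ lies strictly below $g$, and silently drops the case in which all three pairwise LCAs coincide (three tasks branching off the same node of the inheritance tree). That case is not routine; it is exactly where the coarse inheritance tree loses information. Under the general nested-\pcode{finish} \asyncfinish semantics that your own Step 1 permits, consider a task $g$ that spawns $T_c$, then opens an inner \pcode{finish} in which it spawns $T_a$ and $T_b$, closes that inner \pcode{finish}, and then performs the future access $W$ (with a lockset disjoint from the common one) while the outer \pcode{finish} enclosing $T_c$ is still open. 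Then $T_a,T_b,T_c$ are pairwise concurrent with equal pairwise LCAs ($=g$), \hb{T_a}{W} and \hb{T_b}{W} hold, yet $W$ is concurrent with $T_c$; so if the tie is broken by keeping $T_a,T_b$, the race with $T_c$ is missed, and your three-task claim is simply not provable in this case without a further assumption. The paper's proof discharges precisely this situation by leaning on its footnoted join semantics --- a \pcode{join} performed by the common ancestor causes \emph{all} of its outstanding (immediate or recursive) children to join, so the same join that orders $T_a$ and $T_b$ before $W$ also orders $T_c$ --- an assumption your finish-scope formulation does not build in. You therefore need either to restrict the model in the same way (a join awaits every not-yet-joined descendant, as in the paper's \pcode{joined} propagation) or to add an explicit argument for the tied-LCA case; as written, the induction of Step 2 is incomplete at its most delicate point.

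Two smaller points. The dismissal of the degenerate case $g=T_b$ is justified incorrectly: an ancestor's access performed \emph{after} it spawns the chain towards $T_a$ is concurrent with $T_a$'s access, so $g=T_b$ does not force \hb{T_b}{T_a}. The case is actually harmless, but it must be handled by the degenerate reading of Step 1 ($T_b$'s access would have to precede the spawn of the child leading towards $W$, which contradicts $\neg$\hb{T_b}{T_a}) rather than excluded. Also, Step 1 carries real weight --- it is what lets the second half of Step 3 transfer the same finish witness from $T_a$ to $T_c$ --- and it is only sketched; the paper avoids needing it by arguing directly about spawn ancestry and membership in the \pcode{joined} set.
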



\begin{proof}
    Let tasks \task{T1}, \task{T2}, and \task{T3} be three tasks that access shared variable \code{var} in parallel with the same lockset. Let the two tasks with the highest LCA in the inheritance tree be \task{T1} and \task{T2}, their LCA be task \task{P}, and \task{T3} be any task other than \task{T1} and \task{T2}.
    All the parallel tasks, including \task{T1}, \task{T2}, and \task{T3} must be immediate or recursive children of \task{P}, \ie, \task{P} is the common ancestor of all \task{T1},\task{T2} and \task{T3}. An immediate child means that the task has been spawned by \task{P} and a recursive child means that the task has been spawned by some other task which itself is an immediate or recursive child of \task{P}.

    Let us assume that in the future, another task \task{T} accesses the shared variable \code{var} and the access is racy with task \task{T3}, but not with \task{T1} and \task{T2}. This implies the access to \code{var} by \task{T} \emph{happens after} both \task{T1} and \task{T2}. Since \task{T1}, \task{T2}, and \task{T3} have the same lockset, and \task{T} is racy with \task{T3}, so the intersection of their locksets must be empty. This implies that we have two non-racy accesses from tasks \task{T1} and \task{T} to \code{var} even when the accesses are not protected by the same lock. This can only happen if \task{T1} and \task{T} are ordered by task management constructs. That is, one of the following must hold:
    (i) \task{T1} accesses \code{var}, and then spawns \task{T}, either immediate or recursive, or (ii) \task{T1} is present in \code{joined} of \task{T}.
    By a similar argument, one can draw a similar conclusion for \task{T2}, \ie, one of the following must hold: (i) \task{T2} accesses \code{var} and then spawns \task{T}, either immediate or recursive, or (ii) \task{T2} is present in \code{joined} of \task{T}.

    Upon careful observation, one can see that constraints for \task{T1} and \task{T2} can hold simultaneously if and only if both \task{T1} and \task{T2} are present in \code{joined} of \task{T}. Other possibilities lead to
    contradictions. This can only happen if some common ancestor \task{P}$'$ of tasks \task{T1} and \task{T2} calls join, in which all children tasks within join scope of task \task{P}$'$ will join with it, and which then spawns task \task{T} copying \code{joined} from \task{P}$'$ to \task{T} (\task{P}$'$ can be \task{T} itself). Since tasks \task{T1} and \task{T2} have the highest LCA in the inheritance tree, and \task{P}$'$ is the common ancestor of \task{T1} and \task{T2}, \task{P}$'$ must be the ancestor of \task{T3} as well. Therefore, \task{T3} will be under join scope of \task{P}$'$ and so should be present in the \code{joined} of \task{P}$'$ after join call, and therefore to task \task{T} joined as well, like \task{T1} and \task{T2}.

    Sice \task{T3} would be present in \code{joined} of \task{T}, current access by \task{T} will have happens \emph{after} relationship with \task{T3}'s access and so would be non-racy. But we had assumed that \task{T3} access is racy with current \task{T} access, leading to a contradiction.
    This proves our lemma that any future access to the variable, which is racy with any of the existing parallel accesses, must be racy with at least one of the two access entries with the highest LCA.

\end{proof}

\begin{lemma}
    The inheritance vector clock (IVC) correctly computes the two tasks with the highest LCA among three parallel tasks.
\end{lemma}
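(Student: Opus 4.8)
The plan is to reduce the claim to a single combinatorial fact about IVCs and then read the three branches of the iteration off that fact. First I would establish the central invariant: for any two distinct tasks \task{T} and \task{U}, the depth of their lowest common ancestor in the inheritance tree equals the length of the longest common prefix of their IVCs. One direction is immediate from the construction: along the shared ancestor chain $v_0=\mathit{root}, v_1, \dots, v_k=\mathrm{LCA}$ the two tasks inherit identical IVC prefixes, because entry $i$ of an IVC is the scalar clock of $v_i$ at the moment it spawns $v_{i+1}$, and both $v_i$ and $v_{i+1}$ lie on the common chain. For the converse I would use that a task spawn strictly increments the caller's scalar clock and nothing decrements it, so the two \emph{distinct} children of the LCA that lie on the paths to \task{T} and \task{U} were spawned at different clock values; hence the IVCs disagree at position $k$, or --- in the degenerate case where one task is an ancestor of the other --- that task's IVC is exhausted at exactly position $k$ and is a prefix of the other. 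Here I would lean on the stated fact that an IVC uniquely identifies the root-to-task path, which is precisely what upgrades ``shared IVC prefix of length $k$'' to ``shared ancestor chain of depth $k$''.

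Second, I would pin down ``highest LCA'' as the LCA closest to the root, \ie of \emph{smallest} depth; by the invariant, the two tasks with the highest LCA among \task{T1}, \task{T2}, \task{T3} are exactly the pair whose IVCs share the \emph{shortest} common prefix. Third, I would describe the simultaneous scan: it reads positions $0,1,2,\dots$ and halts at the first index $p$ at which the three entries are not all equal or at which some IVC ends, so for every $i<p$ all three IVCs coincide and every pair shares a prefix of length at least $p$. A case split at position $p$ then closes the proof. (i) If exactly one IVC differs there, say that of \task{T}, then the other two agree at $p$, so the two pairs containing \task{T} have prefix length exactly $p$ while the third pair has length $\ge p+1$, and \task{T} with either other task is the unique minimum --- matching the algorithm. (ii) If all three entries differ at $p$, or one IVC ends at $p$ while the other two differ there, all three pairs have prefix length exactly $p$ and any pair is optimal, matching the ``any two of three'' rule. (iii) If exactly one IVC ends at $p$ while the other two still agree at $p$, that task's IVC is a prefix of the other two, hence it is a common ancestor of them; the two pairs containing it have prefix length exactly $p$, the minimum possible, so the algorithm's ``parent task plus any one of the other two'' is again optimal. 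I would also observe that no two of the three IVCs can be identical or end at the same index, because the tasks are distinct (distinctness is in fact the only property of the three tasks the combinatorics uses), which makes the ``unique differing / unique ending'' sub-cases well posed and confines the ties in (ii) to cases where several pairs are genuinely co-optimal.

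The step I expect to be the main obstacle is the first one: establishing the equivalence between IVC common-prefix length and LCA depth with full rigor. The delicate points are using strict monotonicity of a task's scalar clock across \emph{all} of its spawn events (not merely consecutive ones) to guarantee that distinct children record distinct clock values, handling the ``ancestor-of'' degenerate case --- where one IVC is a proper prefix of the other --- separately rather than via a ``first point of difference'', and invoking IVC immutability to be certain the stored IVCs encode exactly the spawn-time path and are untouched by later \pcode{join} operations. Once that equivalence is in place the remaining argument is the routine three-way case analysis above, and combining it with \Cref{lemma:one} gives that \fastpar's pruning to two access entries per lockset is sound.
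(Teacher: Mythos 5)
Your proof is correct, and it rests on the same underlying fact as the paper's --- that a task's IVC extends its ancestors' IVCs and diverges immediately below the LCA --- but you organize it differently and, in one respect, more carefully. The paper argues directly on the tree: it splits on whether the three pairwise LCAs coincide (its Case~1) or one pair's LCA is strictly higher (its Case~2), writes the three IVCs as explicit concatenations IVC(\task{P}) + [$a\ldots b$] + $\cdots$, and then traces the scan; the two facts it needs --- ``the IVC of a task contains the IVC of its ancestor as a prefix'' and ``$z_1 \neq a$, otherwise \task{P} cannot be the LCA'' --- are asserted rather than derived. You instead isolate a standalone two-task invariant (LCA depth equals longest-common-prefix length), prove both directions, and explicitly identify the ingredients the paper leaves implicit: strict monotonicity of the spawner's scalar clock across spawns (so distinct children of the LCA record distinct entries), the degenerate case where one task is an ancestor of the other (handled in the paper only as ``[$z_1\ldots z_m$] is empty, \ie, \task{P} and \task{T3} are the same''), and IVC immutability. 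Once the invariant is in place, ``highest LCA'' becomes ``shortest common prefix'' and your cases (i)--(iii) line up exactly with the paper's Case~2 (non-degenerate), Case~1, and Case~2 (degenerate), so the two proofs do the same work; your version buys a cleaner, reusable lemma and makes the exhaustiveness of the scan's stopping configurations (no two distinct tasks share an IVC) explicit, while the paper's version stays closer to the algorithm's description and the concrete tree picture of \Cref{fig:lca-example}.
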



\begin{figure}[t]
    \centering
    \begin{subfigure}[t]{0.45\linewidth}
        \centering
        {\includegraphics[width=\linewidth]{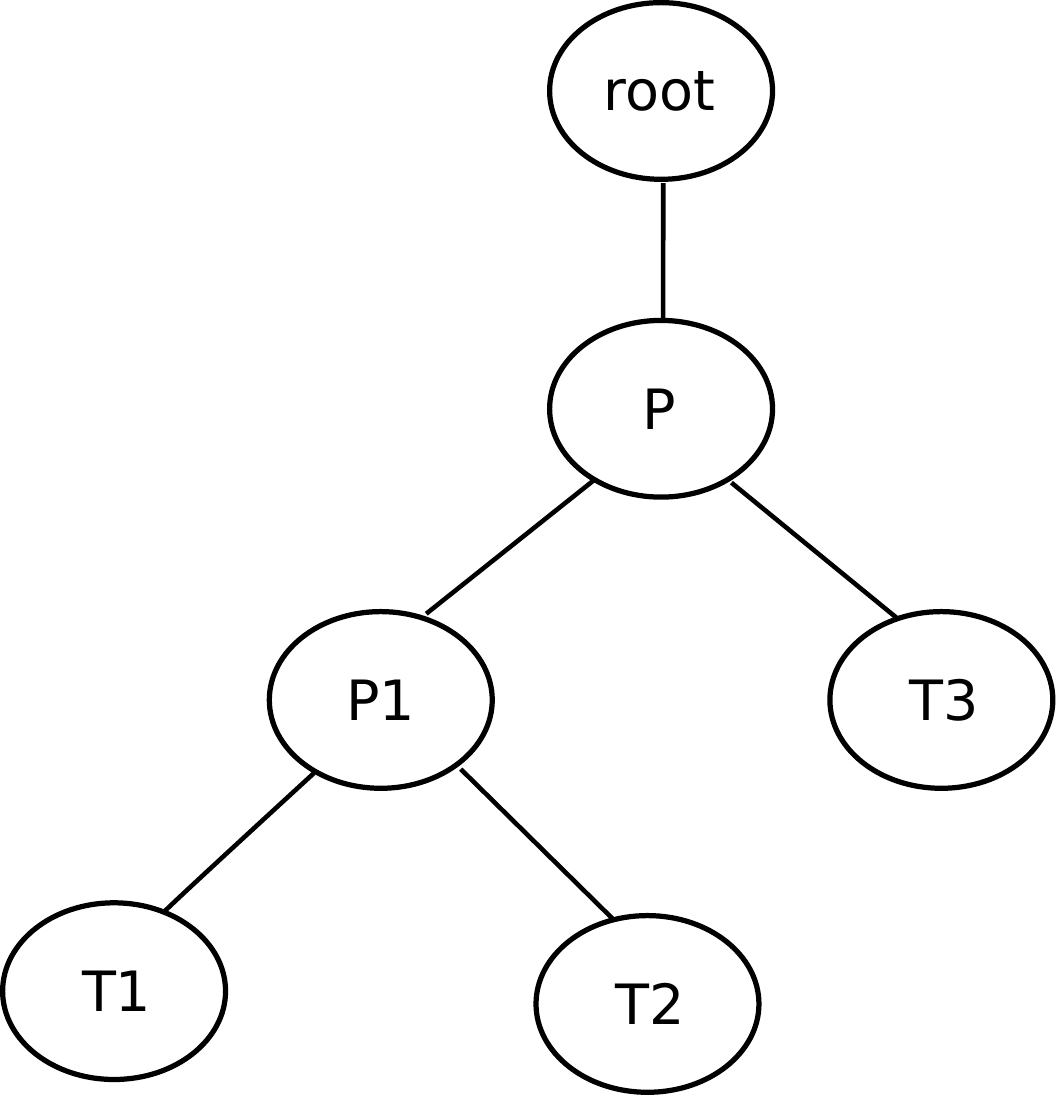}}
        \caption{}
        \label{fig:lca2a}
    \end{subfigure}%
    \hfill
    \begin{subfigure}[t]{0.45\linewidth}
        \centering
        {\includegraphics[width=\linewidth]{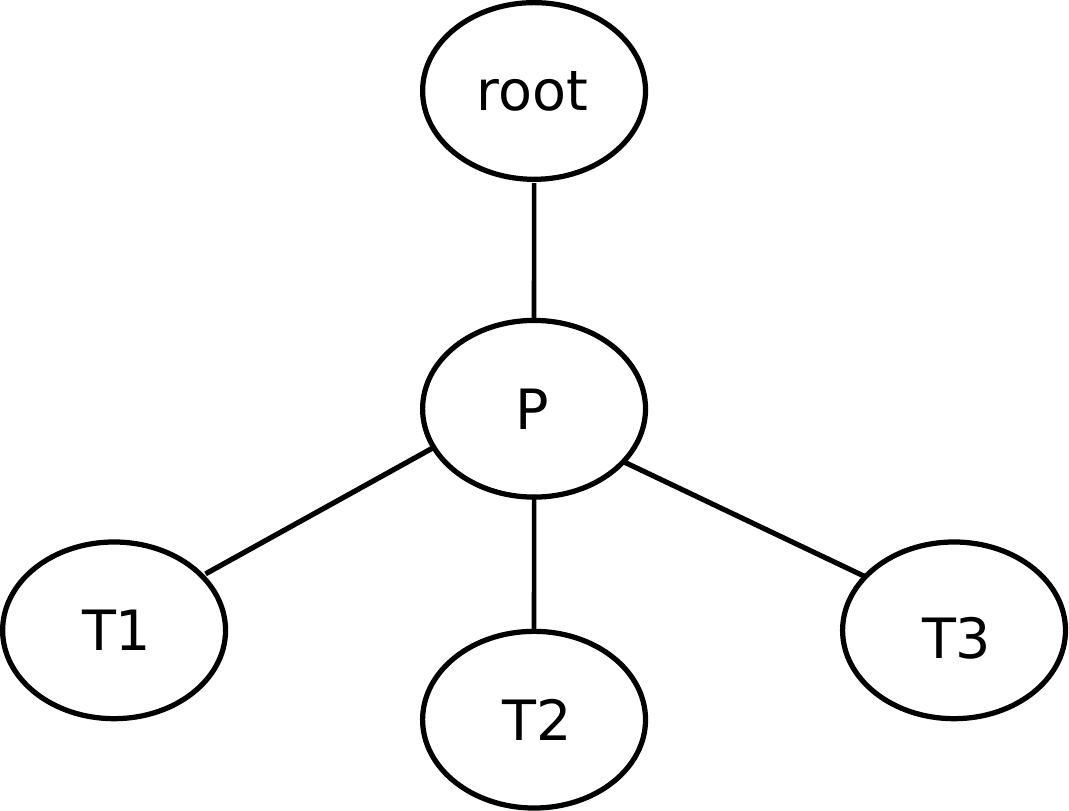}}
        \caption{}
        \label{fig:lca2b}
    \end{subfigure}
    \caption{Different cases possible during LCA computation.}
    \label{fig:lca-example}
\end{figure}

\begin{proof}

    Consider three parallel tasks \task{T1}, \task{T2}, and \task{T3} in an inheritance tree (Figure~\ref{fig:lca-example}). We need to consider the following two cases.

    \begin{description}
        \item [Case 1.] \lca{\task{T1}, \task{T2}}, \lca{\task{T2}, \task{T3}}, and \lca{\task{T1}, \task{T3}} are the same (Figure~\ref{fig:lca2a}).  Here, any pair of tasks chosen from the set \{T1,T2,T3\} will have same LCA, and so the claim holds.

        \item [Case 2.] Without loss of generality, let \lca{\task{T2}, \task{T3}} be strictly higher than \lca{\task{T1}, \task{T2}} (Figure~\ref{fig:lca2b}). Let \lca{\task{T1}, \task{T2}} = \task{P1} and \lca{\task{T2}, \task{T3}} = \task{P}. Then, \task{P} will be the ancestor of \task{P1}. By the structure of IVC, the IVC of a task contains the IVC of its ancestor as a prefix. So, the IVCs of tasks \task{T1} and \task{T2} will have the IVC of \task{P1} as a prefix, and the IVC of tasks \task{P1} and \task{T3} will have the IVC of \task{P} as a prefix. The IVCs of these tasks are as follows:
              \begin{itemize}
                  \item IVC(\task{P1}) = IVC(\task{P}) + \lbrack$a$\ldots$b$\rbrack
                  \item IVC(\task{T1}) = IVC(\task{P1}) + [$x_1$\ldots $x_n$] = IVC(\task{P}) + \lbrack$a$\ldots$b$\rbrack + [$x_1$\ldots $x_n$]
                  \item IVC(\task{T2}) = IVC(\task{P1}) + [$y_1$\ldots $y_m$] = IVC(\task{P)} + \lbrack$a$\ldots$b$\rbrack + [$y_1$\ldots $y_m$]
                  \item IVC(\task{T3}) = IVC(\task{P)} + [$z_1$\ldots $z_m$]
              \end{itemize}

              Here, $[x \ldots y]$ represents an array of clock values and IVC(\task{X}) represents inheritance vector clock of task \task{X}.
              Since, \lca{\task{T2}, \task{T3}}, \ie, \task{P} is strictly higher than \lca{\task{T1},\task{T2}} \ie, \task{P1},  IVC(\task{P}) cannot be equal to IVC(\task{P1}) and so [$a$ \ldots $b$] cannot be empty. Also, either [$z_1$\ldots $z_m$] is empty (\ie, \task{P} and \task{T3} are the same) or $z_1 \neq a$, otherwise \task{P} cannot be the LCA of \task{T2} and \task{T3}.

              While comparing the IVCs of tasks \task{T1}, \task{T2}, and \task{T3}, the first point of difference is after IVC(\task{P}). If [$z_1\ldots z_m$] is empty, then the IVC of \task{T3} will end first and so \fastpar will choose \task{T3} and any one of other two tasks (say \task{T2}). Otherwise, we will have $z_1$ in IVC(\task{T3}) and $a$ in IVC(\task{T1}) and IVC(\task{T2}). Since $z_1 \neq a$, the first point of difference has \task{T3}'s clock value as different. So, again \fastpar will choose \task{T3} and any one of other two (say \task{T2}). In  both cases, \fastpar chose \task{T3} and \task{T2} out of the three tasks. Since IVC(\task{T2},\task{T3}) = \task{P} is the highest LCA across all tasks in the set \{T1,T2, T3\}, so the claim holds.
    \end{description}

    The tasks \task{T1}, \task{T2}, and \task{T3} can be chosen randomly, and the above cases allow for any possible permutation.

\end{proof}

    }{}
\fi

\end{document}